\tikzset{
  treenode/.style = {align=center, inner sep=0pt, text centered,
    font=\sffamily},
  arn_t/.style = {treenode, circle, white, font=\sffamily\bfseries, fill=SeaGreen,
   text width=0.7em},
  arn_n/.style = {treenode, circle, white, font=\sffamily\bfseries, draw=black,
    fill=black, text width=0.7em},
  arn_r/.style = {treenode, circle, red, fill=IndianRed,
    text width=0.7em, very thick},
  arn_x/.style = {treenode, rectangle, draw=black,
    minimum width=0.5em, minimum height=0.5em}
        photon/.style={decorate, decoration={snake}, draw=red},
    electron/.style={draw=blue, postaction={decorate},
        decoration={markings,mark=at position .55 with {\arrow[draw=blue]{>}}}},
    gluon/.style={draw=DodgerBlue},
    m1/.style={draw=DarkOrchid, thick},
    m2/.style={draw=Chocolate, thick},
    m3/.style={draw=YellowGreen, thick},
    leaf/.style = {draw=IndianRed, circle},
    root/.style = {draw=SeaGreen, circle},
    internal/.style = {draw=Black, circle},
}
\renewcommand{\leq}{\leqslant}
\renewcommand{\geq}{\geqslant}
\renewcommand{\bar}[1]{\ensuremath{\overline{#1}}\xspace}
\newcommand{\NO}{\textsc{No}\xspace}
\newcommand{\NP}{\ensuremath{\mathsf{NP}}\xspace}
\newcommand{\NPC}{\ensuremath{\mathsf{NP}}-complete\xspace}
\newcommand{\EE}{\ensuremath{\mathcal E}\xspace}
\newcommand{\II}{\ensuremath{\mathcal I}\xspace}
\newcommand{\JJ}{\ensuremath{\mathcal J}\xspace}
\newcommand{\VV}{\ensuremath{\mathcal V}\xspace}
\newcommand{\defproblem}[3]{
  \vspace{1mm}
\begin{center}
\noindent\fbox{

  \begin{minipage}{.9\linewidth}
  \begin{tabular*}{\linewidth}{@{\extracolsep{\fill}}lr} \textsc{#1}   \\ \end{tabular*}
  {\bf{Input:}} #2  \\
  {\bf{Question:}} #3
  \end{minipage}

  }
\end{center}
  \vspace{1mm}
}
\newcommand{\AG}{\ensuremath{A}\xspace}
\newcommand{\AGS}{\ensuremath{V}\xspace}
\newcommand{\RES}{\ensuremath{R}\xspace}
\newcommand{\GEFA}{\EE-\textsc{GEFA}\xspace}
\newcommand{\LPA}{\EE-\textsc{LPA}\xspace}
\newcommand{\LSAT}{\textsc{LSAT}\xspace}
\newcommand{\threecol}{\textsc{3-Coloring}\xspace}
\newcommand{\tw}{{\mathbf{tw}}}
\title{On Fair Division with Binary Valuations Respecting Social Networks} 
\titlerunning{On Fair Division with Binary Valuations Respecting Social Networks} 
\author{Neeldhara Misra}{Indian Institute of Technology, Gandhinagar \and \url{https://people.iitgn.ac.in/~neeldhara} }{neeldhara.m@iitgn.ac.in}{https://orcid.org/0000-0003-1727-5388}{The author would like to acknowledge the SERB ECR Grant for their support of this work}
\author{Debanuj Nayak}{Indian Institute of Technology, Gandhinagar}{debanuj.nayak@alumni.iitgn.ac.in}{}{}
\authorrunning{N. Misra and D. Nayak} 
\keywords{Fair Division, Social Networks, Envy-Freeness, Parameterized Complexity} 
\begin{document}

\maketitle

\begin{abstract}
We study the computational complexity of finding fair allocations of indivisible goods in the setting where a social network on the agents is given. Notions of fairness in this context are ``localized'', that is, agents are only concerned about the bundles allocated to their neighbors, rather than every other agent in the system. We comprehensively address the computational complexity of finding locally envy-free and Pareto efficient allocations in the setting where the agents have binary valuations for the goods and the underlying social network is modeled by an undirected graph. We study the problem in the framework of parameterized complexity. 

We show that the problem is computationally intractable even in fairly restricted scenarios, for instance, even when the underlying graph is a path. We show NP-hardness for settings where the graph has only two distinct valuations among the agents. We demonstrate W-hardness with respect to the number of goods or the size of the vertex cover of the underlying graph. We also consider notions of proportionality that respect the structure of the underlying graph and show that two natural versions of this notion have different complexities: allocating according to the notion that accounts for locality to the greatest degree turns out to be computationally intractable, while for other notions, the allocation problem can be modeled as a structured ILP which can be solved efficiently.
\end{abstract}

\section{Introduction}
\label{sec:typesetting-summary}


The problem of fairly allocating resources among a set of agents with (possibly distinct) interests in said resources is a fundamental problem with important and varied practical applications. We focus on the problem of allocating indivisible items: in this setting, we have $n$ agents and $m$ resources, and every agent expresses their utilities for the resources, either as a ranking over the resources or by specifying a valuation function. The goal is to determine an \emph{allocation} of the items to the agents that respects some notion of ``fairness'' and 
``efficiency''. We use the term \emph{bundle} to refer to the set of items that an agent receives in an allocation. 

Envy-freeness is one of the most widely used notions of fairness. Given an allocation, an agent envies another if it perceives the bundle of the other agent to be more valuable than her own. An allocation is \emph{envy-free} if no agent envies another. Note that the trivial allocation that leaves every agent empty-handed is always envy-free. Therefore, one is typically interested in fair allocations that also satisfy some criteria of economic \emph{efficiency}, such as completeness (every good should be allocated to some agent), non-wastefulness (no agent receives a piece of cake that is worth nothing to her and worth something to another agent), or Pareto-efficiency (there is no other feasible agreement that would make at least one agent strictly better off while not making any of the others worse off). We remark here that just as there are trivial allocations that are fair, it is also possible to trivially achieve efficiency if we had no fairness considerations involved: for instance, the allocation that gives all goods to a single agent is Pareto-efficient assuming that the agent has a strictly monotonic utility function over the items. 

The question of finding allocations that respect fairness and efficiency demands simultaneously is non-trivial: in particular, such allocations may not exist (if there are two agents and one good, and both agents have positive utility for this single resource), and can be computationally hard to find (for instance, the problem of finding a complete envy-free allocation between even two agents who hold identical valuations over $m$ goods is equivalent to the \textsc{Partition} problem). 

The focus of this work is the notion of local envy-freeness. In this setting, the agents are related by a graph, which might be thought of as modeling a social network over the agents, and we explore notions of fairness that account for the structure of this network. For instance, the notion of envy is now restricted: it only manifests between agents who are friends in the network. This is a compelling model of fairness, since agents are likely to not envy agents about whom they have little or no information. We note that the problem of fair division respecting a social network generalizes the classical notion, which can be captured by considering a complete graph on the agents. Thus, the problem of finding allocations that are ``locally fair'' is a generalization of the classical allocation problem.

\subsection{Related Work}

The model of local envy-freeness has been proposed and considered in several recent lines of work. Some of the earliest considerations for incorporating a graph structure on the agents were made in the context of the \emph{cake-cutting} problem, which is the closely related setting of allocating a divisible resource among agents~\cite{AbebeKP17,BeiQZ17}. Abebe, Kleinberg, and Parkes~\cite{AbebeKP17} consider both directed and undirected graphs and focus on characterizing the structure of graphs that admit algorithms with certain bounds. They also consider the issue of the \emph{price of envy-freeness} in this setting, which compares the total utility of an optimal allocation to the best utility of an allocation that is envy-free. Bei, Qiao, and Zhang~\cite{BeiQZ17}, on the other hand, propose a moving-knife algorithm that outputs an envy-free allocation on trees and an algorithm for computing a proportional allocation on descendant graphs.

We now turn to the literature in the context of indivisible items. Beynier et al~\cite{BeynierCGLMW18} study the fair division problem in the setting of ``house allocation'': here agents have (strict) preferences over items, and each agent must receive exactly one item. An agent envies another in this setting if she prefers the item received by the other agent over her own. In the case of a complete network, for an allocation to be envy-free, each agent must get her top object, and this assignment is automatically Pareto-efficient as well. This motivates the setting of local envy-freeness with respect to a graph on the agents. The authors consider the case when the underlying graph is undirected, and they also consider a variant of the problem where agents themselves can be located on the network by the central authority. These problems turn out to be computationally intractable even on very simple graph structures. 

Bredereck, Kaczmarczyk, and Niedermeier~\cite{Bredereck0N18} consider the problem of graph-based envy-freeness in the context of \emph{directed} graphs and for various classes of valuations: including binary, identical, additive, and even valuations that are both identical and binary. They also consider the complexity of the allocation problem in the framework of parameterized complexity~\footnote{The terminology relevant to this framework is introduced in the next section.}. Somewhat surprisingly, it turns out that finding complete envy-free allocations in the setting of a graph is NP-hard even when the valuations are binary \emph{and} identical. Note that in this setting, every agent in every strongly connected component must get the same number of items: thus, the allocation problem is trivial for directed graphs that are strongly connected, but NP-hard for general directed graphs. Also, it turns out that for general binary preferences, the problem of finding a complete envy-free allocation is NP-hard even when the graph is strongly connected. The problem is also tractable for DAGs: indeed, allocating all resources to a single source agent (corresponding to a vertex with no incoming arcs) is both complete and locally envy-free since nobody can envy a source agent, and empty-handed agents have no envy for each other.  

More recently, Eiben et al.~\cite{EibenGHO20} consider the problem of finding locally envy-free allocations and envy-free allocations that are additionally proportional in the setting of directed graphs in the framework of parameterized complexity, and specifically considering parameters such as treewidth, cliquewidth, and vertex cover --- all of these reflect the structure of the underlying network. It turns out that the problem of finding fair and efficient allocations is tractable for networks that have bounded values for these parameters with some additional assumptions that bound the number of item types or the size of the largest bundle received by an agent. The authors also show hardness results in both the parameterized and classical settings. For instance, the authors show that finding a locally envy-free allocation is NP-hard even when the underlying network is a star, but we note that this is in the setting of general utilities. 

The work of Bredereck et al~\cite{abs-2005-04907,BKKN} demonstrates that the problem of finding fair and efficient allocations in various settings (including graph-based constraints) is fixed-parameter tractable in the combined parameter ``number of agents'' and ``number of item types'' for general utilities. In contrast, our work here focuses on smaller parameters for the special case of binary utilities.


In~\cite{FD2}, Chevaleyre, Endriss, and Maudet consider \emph{distributed} mechanisms for allocating indivisible goods, in which agents can locally agree on deals to exchange some of the goods in their possession. This study focuses on convergence properties for such distribution mechanisms both in the context of the classical setting and the setting involving social constraints coming from an underlying undirected graph. Here, the notions of fairness localized according to the graph, and the network also constraints the exchanges that can take place --- agents can engage in an exchange only if they are friends in the network. There are also some lines of work that suggest eliminating envy by some mechanism for \emph{hiding} information~\cite{HosseiniSVWX20}. 
    
\subsection{Our Contributions}


Our focus in this paper is on the setting when agents have binary valuations over the goods and the underlying social network is modeled by an undirected graph. Our focus is on exploring the computational complexity of finding locally envy-free allocations that are also Pareto efficient (EEF) in the framework of parameterized complexity, building most closely on the works of~\cite{EibenGHO20,BliemBN16,Bredereck0N18}. 

\paragraph*{Bounded Agent Types.} We begin by noting that the setting of undirected graphs can be significantly different from their directed counterparts: indeed, recall that finding a complete and locally envy-free allocation was NP-hard for even identical binary valuations for directed graphs, but the analogous question is easily seen to be tractable for undirected graphs (indeed, observe that the notions of strong connectivity and connectivity coincide). This motivates the question of whether the problem of finding locally EEF allocations is easier for undirected graphs with a bounded number of agent types. We answer this question in the negative by showing that the problem of determining locally envy-free allocations is NP-hard even when there are only two distinct binary valuations among the agents by a reduction from a graph separation problem called \textsc{Cutting $\ell$ Vertices} (Theorem~\ref{thm:2agenttypes}). 

\paragraph*{Sparse and Dense Graphs.} In contrast with the result for DAGs, we show that finding locally envy-free allocations that are Pareto efficient (EEF) is NP-hard even when the underlying graph is a path (Theorem~\ref{thm:nphpaths} and Corollary~\ref{thm:nphpathscor}). Although Beynier et al~\cite{BeynierCGLMW18} also show hardness results for very sparse graphs, we note that our methods are significantly different since the models for the valuations are different and additionally, the allocations we seek need not give every agent exactly one item. Moving away from sparsity, we recall that finding complete envy-free allocations for binary valuations is known to be NP-hard even for complete graphs~\cite{HosseiniSVWX20,AzizGMW15}, which justifies the need for using additional parameters\footnote{The algorithm referred to is XP in the cliquewidth of the underlying graph, the number of agent types and item types.} in the XP\footnote{XP is the class of parameterized problems that can be solved in time $n^{f(k)}$ for some computable function $f$. } algorithm for finding locally envy-free allocations shown by~\cite[Theorem 10]{EibenGHO20}.

\sloppypar
\paragraph*{Structural Parameters I: Treewidth and Cliquewidth} Informally speaking, the parameters treewidth and cliquewidth of graphs quantitatively capture the sparsity and density of the graph by measuring their ``likeness'' to trees and complete graphs. The results we have already for sparse and dense graphs demonstrate that these parameters being bounded alone is not enough to obtain tractable algorithms. On the other hand, the results of~\cite{EibenGHO20} imply that the problem of finding complete and locally envy-free allocations admits XP algorithms when parameterized by either the treewidth or cliquewidth of the underlying graph jointly with the number of item types and agent types. Since their model allows for bidirectional edges, these results apply to the setting of undirected graphs as well. We note that the algorithms described in~\cite{EibenGHO20} focus on complete allocations, but can be adapted to account for Pareto efficiency as well. 

\paragraph*{Structural Parameters II: Vertex Cover and Twin Cover} In the setting of directed graphs and general utilities, we note that the problem of finding a complete and locally envy-free allocation is NP-hard even when the underlying graph is a star. In particular, this demonstrates hardness on graphs with a constant-sized vertex cover\footnote{A vertex cover of a graph is a subset of vertices that contains at least one endpoint of every edge. A graph with a bounded vertex cover also has bounded treewidth.}. It is not clear if this is the case for undirected graphs and binary utilities. We show that the problem of finding locally EEF allocations is W[1]-hard when parameterized by the vertex cover number (Theorem~\ref{thm:vcwhard}). We remark that a stronger hardness result can be observed for the closely related parameter of twin cover\footnote{A twin cover of a graph is a subset of vertices $S$ such that $G \setminus S$ is a disjoint union of cliques, and further, every pair of vertices $u,v$ in any clique of $G \setminus S$ are ``twins'', that is, $N[v] = N[u]$.} --- indeed, the known NP-hardness of finding envy-free allocations for binary valuations on complete graphs~\cite{HosseiniSVWX20,AzizGMW15} implies hardness for graphs that have a twin cover of size zero.


\paragraph*{Few Resources or Agents.} We also consider the cases where the number of goods or the number of agents are relatively small. When considering these parameters, the work of Bliem et al~\cite{BliemBN16} shows that the computation of EEF allocations is FPT when parameterized by the number of goods or the number of agents for additive 0/1 valuations. In contrast, we show that finding EEF allocations respecting the structure of an underlying undirected graph is $W[1]$-hard when parameterized by the number of goods (Theorem~\ref{thm:whard}). On the other hand, the FPT algorithm when parameterized by the number of agents can be extended to account for the graph constraints (noted in Observation~1).


\paragraph*{Other Notions of Fairness.} Finally, we also consider notions of proportionality in the context of graphs --- we refer to these as local and quasi-global proportionality concepts, representing the extent to which the definitions account for the underlying graph. We demonstrate that computing a locally proportional allocation is NP-hard (Theorem~\ref{thm:nphlprop}), while computing a proportional allocation that is quasi-global is tractable (Theorem~\ref{thm:gpropptime}). Notions of local proportionality have been proposed and studied in several of the papers that were summarized in the previous section.

\section{Preliminaries}

We use standard terminology from graph theory and fair division. Unless mentioned otherwise, the graphs we consider are simple and undirected. For a graph $G = (V, E)$, consisting of a set $V$ of vertices and a set $E$ of edges, by $N(v)$ we denote the \emph{neighborhood} of vertex $v \in V$ , i.e., the set $W \subset V$ of vertices such that for each vertex $w \in W$ there exists an edge $e = (v,w) \in E$. The \emph{closed neighborhood} of a vertex $v$ is $N(v) \cup \{v\}$ and is denoted $N[v]$. The \emph{degree} of a vertex $v$, denoted $d(v)$, is $|N(v)|$. A \emph{clique} is a subset of vertices which are pairwise adjacent. An \emph{independent set} is a subset of vertices, no two of which are adjacent. For $X \subseteq V$, the \emph{induced subgraph} $G[X]$ denotes the subgraph whose vertex set is $X$ and the edge set consists of all edges whose both end points are in $X$. 

\begin{longver}
    \begin{definition}
        Let $G$ be a graph.  A {\em tree-decomposition} of a graph $G$ is a pair
        $\mathbb{T} = (T,(B_\alpha)_{\alpha\in V(T)})$, where $\mathbb{T}$ is a rooted tree, such that
        \begin{itemize}
        \item $\cup_{t\in V(T)}{B_t}=V(G)$,
        \item for every edge $xy\in E(G)$ there is a $t\in V(T)$ such that  $\{x,y\}\subseteq B_{t}$, and
        \item for every  vertex $v\in V(G)$ the subgraph of $T$ induced by the set  $\{t\mid v\in B_{t}\}$ is connected.
        \end{itemize}

        The {\em width} of a tree decomposition is $\max_{t\in V(\mathbb{T})} |B_t| -1$ and the {\em treewidth} of $G$
        is the  minimum width over all tree decompositions of $G$ and is denoted by $\tw(G)$.
    \end{definition}

For completeness, we also define here the notion of a \emph{nice tree decomposition with introduce edge nodes}, as this is what we will work with in due course. We note that for a given tree decomposition can be modified in linear time to fulfill the above constraints; moreover, the number of nodes in such a tree decomposition of width~$w$ is $O(w \cdot n)$~\cite{Kloks94}.

    \begin{definition}
    A tree decomposition $\mathbb{T}=(T,(B_\alpha)_{\alpha\in V(T)})$ is a \emph{nice tree decomposition with introduce edge nodes} if the following conditions hold.
    \begin{enumerate}
    \item The tree~$T$ is rooted and binary.
    \item For all edges in $E(G)$ there is exactly one \emph{introduce edge node} in $\mathbb{T}$, where an introduce edge node is a node~$\alpha$ in the tree
      decomposition~$\mathbb{T}$ of~$G$ labeled with an edge~$\{u,v\}\in
      E(G)$ with~$u,v\in B_\alpha$ that has exactly one child
      node~$\alpha'$; furthermore~$B_\alpha=B_{\alpha'}$.
    \item Each node
      $\alpha\in V(T)$ is of one of the following types:
      \begin{itemize}
      \item introduce edge node;
      \item \emph{leaf node}: $\alpha$ is a leaf of $T$ and $B_\alpha=\emptyset$;
      \item \emph{introduce vertex node}: $\alpha$ is an inner node of $T$ with
        exactly one child node $\beta\in V(T)$; furthermore~$B_\beta\subseteq
        B_\alpha$ and~$|B_\alpha\backslash B_\beta|=1$;
      \item \emph{forget node}: $\alpha$ is an inner node of $T$ with exactly one
        child node $\beta\in V(T)$; furthermore~$B_\alpha\subseteq B_\beta$
        and~$|B_\beta\backslash B_\alpha|=1$;
      \item \emph{join node}: $\alpha$ is an inner node of $T$ with exactly two
        child nodes $\beta,\gamma\in V(T)$; furthermore~$B_\alpha = B_\beta =
        B_\gamma$.
      \end{itemize}
    \end{enumerate}
\end{definition}
\end{longver}

An instance of fair division for indivisible goods consists of $n$ \emph{agents} $\AG = \{1, \ldots, n\}$ and $m$ \emph{goods} (also called \emph{items} or \emph{resources}), $\RES = \{o_1, \ldots, o_m\}$. Further, we are also given \emph{valuations} (also called \emph{preference functions} or \emph{utilities}) $\nu_\ell: 2^\RES \rightarrow \mathbb{Z} $ for every agent $\ell \in \AG$. We will assume throughout that the valuation functions are \emph{additive}, i.e., for each agent $\ell \in \AG$ and any set of goods $S \subseteq \RES$, $\nu_\ell(S) := \sum_{o \in S} \nu_\ell(\{o\})$. A \emph{$0/1$ valuation} is a function that takes values in $\{0,1\}$, while valuations are said to be \emph{identical} if every agent has the same preference function. In the context of $0/1$ valuations, we say that an agent \emph{values} or \emph{approves} a good if her utility for the good is $1$. We will use $\VV$ to denote the valuations of the agents $\AG$ over $\RES$. When considering fair division in the context of social networks, we are also given an undirected graph $G$ over the agents $\AG$.


Every subset $S \subseteq \RES$ is called a \emph{bundle}. An allocation is a function $\pi: \AG \rightarrow 2^{\RES}$ mapping each agent to the bundle she receives, such that $\pi(i) \cap \pi(j) = \emptyset$ when $i \neq j$ because the items cannot be shared. When $\bigcup_{a \in \AG} \pi(a) = \RES,$ the allocation $\pi$ is said to be \emph{complete}, otherwise it is \emph{partial}. An allocation is \emph{non-wasteful} if every good is allocated to an agent that assigns positive utility to it.

An allocation $\pi^\prime$ \emph{dominates} $\pi$ if for all $\ell \in \AG$ it holds that $\nu_\ell(\pi(\ell))) \leq \nu_\ell(\pi^\prime(\ell))$ and for some $a_j \in A$ it holds that $\nu_{a_j}(\pi(a_j))) < \nu_{a_j}(\pi^\prime(a_j))$. An allocation $\pi$ is \emph{Pareto-efficient} if there exists no allocation $\pi^\prime$ that dominates $\pi$. In the case of $0/1$ preferences, we note that an allocation is Pareto-efficient if and only if it is complete and non-wasteful, assuming that each resource provides a value of 1 to at least one agent.

Given an instance of fair division $(\AG,\RES, G = (\AG,E),\VV)$ as described above, we now introduce the following fairness notions:

\begin{itemize}
    \item \textbf{Graph Envy-Freeness (GEF).} We call allocation $\pi$ graph-envy-free if for each pair of (distinct) agents $i,j \in \AG$ such that $j \in N(i)$, it holds that $\nu_i(\pi(i)) \geq \nu_i (\pi(j))$.
    \item \textbf{Global Proportionality (GP).} We say that an allocation $\pi$ achieves global proportionality if for each agent $\ell \in \AG$, $\nu_i(\pi(i)) \geq \frac{1}{n} \nu_i(\RES)$.
    \item \textbf{Quasi-Global Proportionality (QP).} We say that an allocation $\pi$ achieves quasi-global proportionality if for each agent $\ell \in \AG$, $\nu_i(\pi(i)) \geq \frac{1}{d(\ell) + 1} \nu_i(\RES)$.
    \item \textbf{Local Proportionality (LP).} We say that an allocation $\pi$ achieves local proportionality if for each agent $\ell \in \AG$, $\nu_i(\pi(i)) \geq \frac{1}{d(\ell) + 1} \sum_{j \in N[i]} \nu_i(\pi(j))$.
\end{itemize}

Note that the graph versions of variants of envy-freeness (such as EF1 or EFX) can be defined analogously in a straightforward manner. It is easy to see that any graph envy-free allocation is also locally proportional and that if the underlying graph is complete, then local proportionality coincides with the standard notion of proportionality. For the problems we consider, we are typically given an instance of fair division on a graph, and the goal is to determine if there exists an allocation that satisfies some notion of fairness and efficiency. For instance, consider the following problems:

\defproblem{Graph Envy-Free Allocation (\GEFA)}{An instance of fair division on a graph\\ $(\AG,\RES,G = (\AG,E),\VV)$.}{Does there exist an envy-free, Pareto-efficient allocation?}

\defproblem{Locally Proportional Allocation (\LPA)}{An instance of fair division on a graph\\ $(\AG,\RES,G = (\AG,E),\VV)$.}{Does there exist a Pareto-efficient allocation that achieves local proportionality?}

For any efficiency concept (X) and fairness notion (Y), the X-YA problem is defined in a similar fashion. Although our questions are posed as decision versions, we note that most of our algorithms can be easily adapted to handle the natural ``search'' version of these problems. We refer the reader to the books~\cite{FD1,FD3} and the article~\cite{FD2} for additional background on fair division.

A problem \emph{parameterized} by $k$ is fixed-parameter tractable if it is solvable in $f(k)|I|^{O(1)}$ time for some computable function $f$ and the input size $|I|$ according to the problem’s encoding. Informally, $W$-hard problems are presumably not fixed-parameter tractable. The problem of finding a clique on at least $k$ vertices is $W[1]$-hard when parameterized by $k$. We call a problem para-NP-hard if it is NP-hard even for a constant value of the parameter. For a comprehensive introduction to the paradigm of paramterized complexity and algorithms, we refer the reader to the book~\cite{CyganFKLMPPS15}.


\section{Envy-Freeness}

\subsection{NP-hardness for two agent types}


In this section, we show that finding \GEFA{} allocations is NP-hard even in the setting of near-identical binary valuations: in particular, when all agents have one of two possible utilities over the items. Note that in the setting of identical binary valuations when the graph $G$ is connected, it is easy to see that all agents must value all goods without loss of generality, and that desirable allocations are the ones that allocates the same number of goods to each agent, where the goods themselves may be arbitrarily chosen. Indeed, it is clear that an allocation with equal bundle sizes is \GEFA{}. On the other hand, consider a \GEFA{} allocation that does not allocate bundles of equal size to all agents. Let $a_i$ and $a_j$ be two agents that receive bundles of different size. We can always find two adjacent agents on a path from $a_i$ to $a_j$ who have received bundles of different sizes, contradicting envy-freeness.  


We now show that even a slightly more general situation is computationally intractable --- in particular, if all agents have one of two valuations over the goods, the problem of identifying \GEFA{} allocations is NP-hard.

\begin{theorem}
    \label{thm:2agenttypes}
    The \GEFA problem is \NPC{} even when there are two agent types, and further, agents have 0/1 valuations over the goods.
\end{theorem}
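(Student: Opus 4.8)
The plan is to reduce from the \NP-complete problem \textsc{Cutting $\ell$ Vertices}: given a connected graph $H$ and integers $k$ and $\ell$, decide whether the vertex set of $H$ can be partitioned into three parts $(X,S,Y)$ with $|X|=\ell$, $|S|\le k$, both $X$ and $Y$ nonempty, and no edge of $H$ joining $X$ to $Y$ --- that is, whether a set of exactly $\ell$ vertices can be isolated from the rest of $H$ by a small cut. Membership of \GEFA in \NP is immediate: given a candidate allocation one checks graph-envy-freeness edge by edge, and, since valuations are $0/1$, Pareto-efficiency is equivalent to completeness together with non-wastefulness (every good in the instances we build is valued by some agent), which is checkable in polynomial time.

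For the hardness, the guiding observation is the one already used for identical valuations: if two agents are adjacent and share the same valuation, then graph-envy-freeness forces their bundles to have equal value. Hence one cannot take $V(H)$ itself as a set of same-type agents --- $H$ is connected, so every bundle would be forced to the same value and the instance would carry no information. Instead, from $(H,k,\ell)$ we build a graph $G$ on exactly two agent types as follows. There is a type-$A$ agent $\hat v$ for each $v\in V(H)$; these agents form an independent set. The combinatorial structure of $H$ is transmitted by a family of type-$B$ agents interleaved with the $\hat v$'s (so that, for $uv\in E(H)$, the agents $\hat u$ and $\hat v$ are linked only through type-$B$ agents, never by a direct edge), and a few further auxiliary agents and goods are added to force every relevant agent up to a prescribed ``floor'' value. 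The two $0/1$ valuations and the good multiplicities are chosen so that there is a pool of ``token'' goods valued by both types, of which there are exactly enough for at most $\ell$ type-$A$ agents to become ``rich'', and a pool of ``filler'' goods valued only by type $B$ (so, by non-wastefulness, forced onto type-$B$ agents). The point is that across a type-$A$/type-$B$ boundary graph-envy-freeness is asymmetric, and whenever the two sides of such a boundary are assigned different token counts the type-$B$ side must be compensated with filler; consequently the scarce filler budget bounds the total ``mismatch'' between token-rich and token-poor type-$A$ agents, which translates into a bound on the size of the cut separating the rich agents from the rest of $H$.

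The equivalence is then established in the two usual directions. From a partition $(X,S,Y)$ witnessing a positive instance one builds an allocation: give one token to $\hat v$ for each $v\in X$ and none to the other vertex-agents, place just enough filler on the type-$B$ agents sitting along the edges incident to $S$, hand the auxiliary goods to the auxiliary agents, and dump any leftover goods on agents that can absorb them without creating envy; one then checks that the only edges that could be violated are those that would join $X$ to $Y$ in $H$, and there are none. Conversely, from any Pareto-efficient graph-envy-free allocation one reads off the set $X$ of token-holding type-$A$ agents; the token count forces $|X|=\ell$, the filler budget forces the cut around $X$ in $H$ to be small, and from that cut one extracts the desired set $S$ of at most $k$ vertices, giving a positive instance.

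The crux, and the step I expect to require the most care, is the arithmetic of the gadget: the numbers of token, filler and auxiliary goods, and the auxiliary type-$B$ agents, must be set so that simultaneously (a) every agent that is supposed to receive the floor value is genuinely forced to it (this is what the auxiliary agents and the asymmetry of envy across type boundaries achieve), (b) the supply is consumed exactly by the intended rich/boundary/poor pattern with cardinalities $\ell$ and at most $k$, and (c) no degenerate allocation slips through --- for instance one that equalises all bundles, or piles the entire surplus onto one agent, or realises a cut larger than the budget. Checking that these conditions can be met together while keeping valuations $0/1$ and using only two agent types is the delicate part; the rest of the argument is routine verification.
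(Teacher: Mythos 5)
Your plan picks the same source problem as the paper (\textsc{Cutting $\ell$ Vertices}) and the right architecture --- vertex agents of one type kept independent, the structure of $H$ transmitted through agents of the second type, a scarce pool of goods valued by both types and a padding pool valued by one type --- but it stops exactly where the proof has to begin. The entire content of the theorem is the concrete gadget: which auxiliary agents with which adjacencies, and which exact good counts, make your conditions (a)--(c) hold, and you explicitly defer this (``checking that these conditions can be met together \ldots is the delicate part''). For comparison, the paper realizes the sketch as follows: each vertex $v_i$ of $H$ gives a type-A ``greedy'' agent $w_i$ and a type-B ``happy'' agent $u_i$; each edge $(v_i,v_j)$ of $H$ becomes the agent edges $(w_i,u_j)$ and $(w_j,u_i)$, together with $(w_i,u_i)$ for every $i$; one extra type-A ``trigger'' agent is adjacent to all the $w_i$; the goods are $\ell$ ``coveted'' goods valued by everyone, $n-\ell+1$ goods valued only by type A, and $\ell+k$ goods valued only by type B. The trigger plus the tight count $n-\ell+1$ is what drives a cascading-envy argument forcing every coveted good onto a distinct type-A vertex agent. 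Nothing in your sketch plays this role: having ``exactly enough tokens for at most $\ell$ rich agents'' does not by itself prevent tokens from landing on type-B agents, or from piling onto a single agent --- which is precisely your own degenerate case (c), left unresolved.

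There is also a substantive accounting problem in the mechanism you do describe. You place the type-B agents ``along the edges'' of $H$ and argue that the filler budget bounds the total mismatch across type boundaries; that bounds an \emph{edge} cut around the rich set, whereas \textsc{Cutting $\ell$ Vertices} asks for a \emph{vertex} separator $S$ with $|S|\le k$, and a small vertex separator can be incident to many edges. To extract a set of at most $k$ vertices the compensation must be charged per vertex: in the paper the type-B agent $u_i$ is adjacent to $w_j$ for every $j$ with $v_j \in N[v_i]$, so $u_i$ needs a type-B-only good exactly when $v_i \in N[X]$, and the budget of $\ell+k$ such goods yields $|N(X)\setminus X| \le k$ directly, which is the required separator. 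As written, your final step ``from that cut one extracts the desired set $S$ of at most $k$ vertices'' does not follow, so the reverse direction has a genuine gap in addition to the unspecified arithmetic.
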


\begin{proof}
We reduce from the \textsc{Cutting $\ell$ vertices} problem, where we are given a graph $G = (V,E)$, integers $\ell$ and $k$, and the question is if there exists a partition of the vertex set $V$ into $X \cup S \cup Y$ such that $|X| = \ell$, $|S| \leq k$ and there is no edge between $X$ and $Y$. It follows from~\cite{BuiJ92} that this problem is NP-hard. We now describe the instance of \GEFA{}. 
We introduce $2n+1$ agents $w_1,\ldots,w_n$, $u_1,\ldots,u_n$ and $s$, where $n = |V|$. We call $s$ the trigger agent and we also refer to the $w_i$'s and $u_i$'s as greedy and happy agents, respectively. The graph structure on the agents is defined as follows. For every edge $e = (v_i,v_j) \in E$, we introduce the edges $(w_i,u_j)$ and $(w_j,u_i)$. We also make the trigger agent adjacent to every greedy agent. We also add the edges $(w_i,u_i)$ for all $1 \leq i \leq n$. We now turn to the items. We introduce $\ell$ items called the \emph{coveted} items: $\{g_1,\ldots,g_\ell\}$. Every agent has a utility of one for these items. We introduce $(n-\ell+1)$ items $\{p_1,\ldots,p_{n-\ell+1}\}$ called the $w$-type items and $\ell+k$ items $\{q_1,\ldots,q_{\ell+k}\}$ called the $u$-type items. The trigger agent and the greedy agents have utility one for all $w$-type items and the happy agents have utility one for all $u$-type items. This completes the description of the reduction. 

We first argue the forward direction. Given a subset of at most $k$ vertices $S$, we allocate all the coveted goods to the $\ell$ greedy agents corresponding to vertices of $X$, and the $w$-type items to the remaining greedy agents and the trigger agent. Also, allocate the $k+\ell$ $u$-type items to happy agents corresponding to vertices in $X \cup S$. Observe that this allocation is both locally envy free and Pareto efficient, since the only empty-handed agents are the happy agents corresponding to vertices of $Y$, but the only agents they potentially envy are agents in $X$ (since these are the agents who got the coveted goods) --- however, recall that there are no edges between $X$ and $Y$. 

In the reverse direction, let a locally envy free, Pareto efficient allocation be given. To begin with, note that the trigger agent ensures that no coveted good is allocated to a happy agent: indeed, if the agent $u_i$ has a coveted good, then this generates envy in $w_i$, who must be given either a coveted good or a $w$-type good. This in turn makes the trigger agent envious, who must also be given either a coveted good or a $w$-type good. At this point, all the remaining $(n-1)$ greedy agents are envious (of the trigger agent), but there aren't enough goods to account for all of them: thus any allocation that does not assign all coveted goods to greedy agents is not locally envy free. Let $X$ be the subset of vertices corresponding to the $\ell$ greedy agents who were allocated the coveted goods, and let $S^\star$ be the subset of vertices corresponding to $(k + \ell)$ happy agents who were given $u$-type goods.  Note that $X \subseteq S^\star$ and $|S^\star \setminus X| \leq k$. It is easy to check that if we let $S := S^\star \setminus X$ and $Y := V \setminus (X \cup S)$, then $(X \cup S \cup Y)$ is a partition of the desired kind. Indeed, if not, there is an edge between a vertex in $X$ and a vertex in $Y$, but since the happy agents corresponding to vertices in $Y$ are (by definition) empty-handed and the agents corresponding to vertices of $X$ have been assigned a coveted good, this edge would violate local envy-freeness, a contradiction. This concludes the argument. 
\end{proof}

\subsection{W-hardness parameterized by goods}

In this section, we demonstrate the hardness of finding \GEFA allocations even when the number of goods is bounded by showing that the problem is $W[1]$-hard when parameterized by the number of goods.

\begin{theorem}
    \label{thm:whard}
    The \textsc{\GEFA} problem is $W[1]$-hard when parameterized by the number of goods, even when agents have 0/1 valuations over the goods.
    \end{theorem}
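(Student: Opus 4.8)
The plan is to reduce from a canonical W[1]-hard problem whose solution size naturally translates into a bounded number of goods. The most promising candidate is \textsc{Multicolored Clique} (or equivalently \textsc{Clique}): given a graph $H$ with vertex set partitioned into $k$ color classes $V_1,\dots,V_k$, decide whether $H$ has a clique using exactly one vertex from each class. Here the target is a clique on $k$ vertices, and the key observation that makes the parameterization work is that the number of \emph{goods} in the constructed \GEFA{} instance should depend only on $k$, while the number of \emph{agents} and the complexity of the underlying graph may depend on $|V(H)|+|E(H)|$. This is the natural division of labor: goods are the scarce "global" resource, so only $O(k)$ (or $O(k^2)$) of them should be needed to encode a choice of $k$ vertices and the $\binom{k}{2}$ edges among them.

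The construction I would attempt: create $O(k)$ "vertex-selection" goods, one batch per color class, together with $O(k^2)$ "edge-validation" goods, one per pair of classes. For each color class $i$ introduce a gadget of agents — one agent per vertex $v \in V_i$ — arranged so that, because of envy constraints propagating along the social network, exactly one such agent can be "activated" (receive a selection good), and this choice is forced to be consistent across the gadget. The non-edges of $H$ are then encoded as forbidden combinations: for every pair $v\in V_i$, $w\in V_j$ with $vw\notin E(H)$, one places an edge in the social network between the agent representing $v$ in class $i$ and the agent representing $w$ in class $j$ (or between suitable "copy" agents), so that activating both simultaneously creates unavoidable envy that no allocation of the few remaining goods can repair — exactly the trigger-agent trick used in Theorem~\ref{thm:2agenttypes}, where an envy cascade exhausts a bounded supply of goods. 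Pareto-efficiency (equivalently, complete + non-wasteful for $0/1$ valuations) is used to force every good to be placed on an agent who values it, which is what pins down the selection; "dummy" agents who value a good but lie in an independent set can absorb goods harmlessly when needed. One must also add padding agents/goods so that a legitimate clique yields a genuinely complete, non-wasteful, locally envy-free allocation: agents not corresponding to the chosen clique must be assignable bundles (possibly empty) without creating envy, which is where the bipartite-style separation (no edges between "chosen" and "empty-handed") analogous to the $X$/$Y$ argument is invoked.

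The forward direction should be routine once the gadget is right: from a multicolored clique $\{v_1,\dots,v_k\}$, activate the corresponding agents, distribute the edge-validation goods to the agents encoding the (present) edges $v_iv_j$, hand remaining goods to dummy valuers, and check local envy-freeness class by class and pair by pair. The reverse direction is the delicate part and where I expect the main obstacle: I must show that \emph{any} locally envy-free, Pareto-efficient allocation necessarily (i) places exactly one selection good per class on a single vertex-agent — ruling out "spreading" a class's goods across several vertex-agents or hiding them on padding — and (ii) is forced, via the non-edge social-network edges and a counting argument on the scarce edge-validation goods, to pick a set of vertices that is pairwise adjacent in $H$. Getting both the "exactly one per class" rigidity and the "no forbidden pair" rigidity simultaneously, using only $f(k)$ goods, is the crux; the standard device is to make the envy cascades triggered by a wrong choice demand strictly more goods than are available (as in the $(n-1)$-greedy-agents overflow in Theorem~\ref{thm:2agenttypes}), and to make all padding agents form independent sets so they can never be the source of a repairing cascade. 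Finally I would verify that the whole reduction is computable in polynomial time and that the number of goods is bounded by a computable function of $k$ (targeting something like $O(k^2)$), completing the W[1]-hardness; $0/1$ valuations are maintained throughout since every agent either approves or disapproves each good.
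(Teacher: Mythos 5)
Your write-up is a plan rather than a proof, and the one concrete mechanism it commits to does not work as stated. You propose to encode each non-edge $vw \notin E(H)$ by putting a social-network edge between the agents representing $v$ and $w$ (or ``copies''), so that activating both ``creates unavoidable envy.'' With additive $0/1$ valuations this fails: if both agents are activated, each holds a selection good it values, so each values its own bundle at least as much as its neighbour's single good, and graph-envy-freeness (which only requires $\nu_i(\pi(i)) \geq \nu_i(\pi(j))$) is satisfied --- no envy arises. Making the envy \emph{conditional} on both endpoints of a non-edge being chosen is exactly the hard part, and it cannot be patched by per-non-edge ``checker'' agents either: each such checker would need its own pacifying good (goods cannot be shared), and there can be $\Omega(n^2)$ non-edges while you are only allowed $f(k)$ goods. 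More generally, with only $f(k)$ goods the identity of the chosen vertex in a class must be encoded by \emph{which agent} holds a good, not by which good is held, and your sketch never supplies the gadget that (i) forces exactly one activation per class and (ii) detects a forbidden pair --- you yourself flag both as ``the crux'' and leave them open.

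For contrast, the paper's reduction (from \textsc{Clique}) avoids non-edge checking altogether by encoding the clique through its \emph{edges}: there are $\binom{k}{2}$ popular goods valued by everyone, $k$ specialized goods valued only by vertex agents, and $\binom{k}{2}+1$ dummy goods for a set $S$ of guard agents adjacent to every edge agent; cliques of $\binom{k}{2}+1$ auxiliary agents attached to each vertex agent, together with $S$, force every popular good onto a distinct edge agent. Each edge agent holding a popular good makes both its endpoint vertex agents envious, and the only way to appease them is with the $k$ specialized goods; scarcity then forces the $\binom{k}{2}$ chosen edges to span at most $k$ vertices, i.e.\ a clique. If you want to salvage your multicolored-clique route, you would need an analogous positive, counting-based selection mechanism rather than direct non-edge penalties.
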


    We describe a reduction from the $W[1]$-hard problem \textsc{Clique}, given a graph $G$ and an integer $k$, does there exist a clique on $k$ vertices in $G$. Let $\II = (G, k)$ be an instance of clique, where $G = (V,E)$ and further, $V = \{v_1, \ldots, v_n\}$ and $E = \{e_1, \ldots, e_m\}$. We assume, without loss of generality, that $m \geq {k \choose 2}$, since we can always return a trivial \NO{}-instance when this is not the case. We begin by describing the construction of the reduced instance $\JJ_\II := (\AG,\RES, H = (\AG,F), \VV)$. We define the set of goods $\RES$ as follows:

    $$ \RES = \left\{q_1, \ldots, q_\ell, p_1, \ldots, p_k, d_1, \ldots, d_{\ell+1} \right\}, $$

    where $\ell = {k \choose 2}$. For ease of discussion, we call the first $\ell$ goods \emph{popular} and the next $k$ goods \emph{specialized}. The remaining are \emph{dummy} items. We now define the set of agents as $\AG = V \cup E \cup S \cup W$, where:
    $$S := \{s_1,\ldots,s_{\ell + 1}\} \mbox{ and } W := \{w_{ij} ~|~ i \in [n], j \in [\ell + 1] \}. $$

    We indulge in a mild abuse of notation and use $v_i$ to refer to both an element of $V$ from the clique instance and an agent of $A$ in the reduced instance (similarly for edges). The edges of $H$ are as follows:

    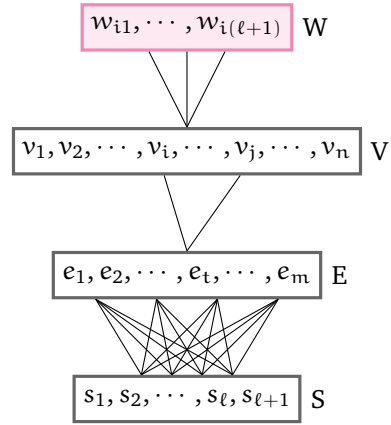
\begin{wrapfigure}{r}{0.4\textwidth}
    \centering
    \begin{tikzpicture}[
    W/.style={rectangle, draw=magenta!60, fill=magenta!10, very thick, minimum size=6mm},
    E/.style={rectangle, draw=black!60, fill=white!5, very thick, minimum size=6mm},
    V/.style={rectangle, draw=black!60, fill=white!5, very thick, minimum size=6mm},
    S/.style={rectangle, draw=black!60, fill=white!5, very thick, minimum size=6mm},
    ]
    \node[W] (W) [label=right:W]{$w_{i1}, \cdots, w_{i(\ell+1)}$};
    \node[V] (V) [below=of W,label=right:V] {$v_1 , v_2 , \cdots, v_i, \cdots, v_j, \cdots, v_n $};
    \node[E] (E) [below = of V,label=right:E] {$e_1 , e_2 , \cdots, e_t, \cdots, e_m $};
    \node[S] (S) [below = of E,label=right:S] {$s_1 , s_2 , \cdots, s_\ell, s_{\ell+1} $};

    \begin{scope}
        \foreach \i in {-1,0,1}{%
        \draw[-] (V.north) -- ([xshift=\i * 0.5 cm]W.south) ;}
        \foreach \i in {-1.5,3.5}{%
        \draw[-] (E.north) -- ([xshift=\i * 0.2 cm]V.south) ;}
        \foreach \i in {-6,-2,2,6}{%
        \foreach \j in {-3,-1,1,3}{
        \draw[-] ([xshift=\j * 0.2 cm]S.north) -- ([xshift=\i * 0.2 cm]E.south) ;} }
    \end{scope}
    \end{tikzpicture}
    \caption{A sketch of the reduced instance based on an instance $G = (V,E), k$ of \textsc{Clique}. Recall that $\ell$ denotes ${k \choose 2}$, and only some vertices of $W$ are shown for clarity. The shaded vertices induce a complete subgraph. The edge $e_t = (v_i, v_j)$ is adjacent only to $v_i$ and $v_j$ among vertices in $V$.}
\end{wrapfigure}

    \begin{itemize}
    \item $e = (u,v) \in E$ is adjacent to all vertices of $S$ and $u,v$.
    \item $v_i \in V$ is adjacent to all vertices $w_{ij}$ for $j \in [\ell + 1]$.
    \item For each $1 \leq i \leq n$, $H[\cup_{j = 1}^{\ell+1}w_{ij}]$ induces a clique.
    \end{itemize}

    The preferences of the agents are as follows:

    \begin{itemize}
    \item All agents have an utility of $1$ for the popular goods.
    \item All agents in $V$ have an utility of $1$ for the specialized goods.
    \item The agent $s_i \in S$ has an utility of $1$ for $d_i$, for all $i \in [\ell + 1]$.
    \end{itemize}

    This completes the construction of the instance. Note that the number of goods is a function of $k$ alone. We now turn to the argument for equivalence,  a clique $X \subseteq V$, of size $k$ exists in $G$ \emph{iff}, there is an GEF allocation for the instance constructed $\JJ_\II := (\AG,\RES, H = (\AG,F), \VV)$ 

    \begin{proof}
    In the forward direction, let $X \subseteq V$ be a clique in $G$ and let $Y := G[X] \subseteq G$. Consider now the following allocation $\pi$. We let each agent corresponding to $Y$ receive one popular item, each agent corresponding to $X$ receive one specialized item, and finally allocate the item $d_i$ to $s_i$ for  all $i \in [\ell + 1]$. 
    
 
\begin{longver}
    \begin{claim}
    The allocation $\pi$ is Pareto-efficient and envy-free with respect to $H$.
    \end{claim}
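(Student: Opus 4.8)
The plan is to verify the two properties in turn, using the characterization recorded in the preliminaries: for $0/1$ valuations an allocation is Pareto-efficient precisely when it is complete and non-wasteful, provided every good is valued by at least one agent. That side condition holds trivially here, since popular goods are valued by every agent, specialized goods by every agent of $V$, and $d_i$ by $s_i$. Completeness of $\pi$ is a counting check: $Y = G[X]$ has exactly $\ell = {k \choose 2}$ edges and there are exactly $\ell$ popular goods; $X$ has exactly $k$ vertices and there are exactly $k$ specialized goods; and the $\ell+1$ dummies $d_1,\ldots,d_{\ell+1}$ are matched bijectively to $s_1,\ldots,s_{\ell+1}$. Thus every good is allocated, and non-wastefulness is immediate from the description of $\pi$, each good being handed to an agent that values it. Hence $\pi$ is Pareto-efficient.

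For graph-envy-freeness, I would first observe that under $\pi$ the only agents holding a non-empty bundle are the $\ell$ edge-agents that are edges of $Y$ (each with a popular good), the $k$ vertex-agents in $X$ (each with a specialized good), and the agents of $S$ (each with its own dummy); every other agent is empty-handed and hence never envied. It therefore suffices to rule out, for each agent $a$ and each neighbour $b$ of $a$ with $\pi(b)\neq\emptyset$, that $a$ envies $b$, and I split this according to the type of good in $\pi(b)$. If $\pi(b)=\{d_i\}$ then $b=s_i$, and no other agent values $d_i$, so there is no envy. If $\pi(b)$ is a specialized good then $b=v_i\in X$, whose neighbours in $H$ are the edge-agents incident to $v_i$ together with $w_{i1},\ldots,w_{i(\ell+1)}$, and none of these value specialized goods, so there is no envy. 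If $\pi(b)$ is a popular good then $b$ is an edge-agent $e=(u,v)$ that is an edge of $Y=G[X]$, whose neighbours in $H$ are $u$, $v$, and all of $S$; each of these already holds a good it values, since $u,v\in X$ and hence receive specialized goods, and each $s_i$ receives $d_i$. Consequently no envy arises anywhere: the edges inside the $W$-cliques and those between $V$ and $W$ are subsumed by these observations, since their $W$-endpoints are empty-handed and value only popular goods, which no neighbour of theirs holds.

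The one step where the clique hypothesis is used — and, I expect, the heart of the argument — is the popular-good case: exactly because $Y=G[X]$, both endpoints of any popular-good-holding edge-agent lie in $X$ and are therefore themselves compensated with specialized goods, and it is this compensation (together with the parallel role of the dummy goods in neutralizing the potential envy of the $S$-agents) that defeats the envy the popular goods would otherwise create. Everything else is the routine edge-by-edge bookkeeping sketched above.
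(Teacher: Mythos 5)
Your proof is correct and follows essentially the same route as the paper: Pareto-efficiency via completeness plus non-wastefulness, and envy-freeness by a case analysis that hinges on the same key point, namely that because $Y=G[X]$, every agent adjacent to a holder of a popular good (the two clique endpoints and the agents of $S$) is itself compensated with a good it values. Your reorganization of the case analysis by the type of good held by the potentially envied agent, rather than by the type of the potentially envious agent, is only a cosmetic difference.
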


    \begin{proof} For Pareto-efficiency, it suffices to argue that $\pi$ is complete and non-wasteful. This is evident from the definition of $\pi$ --- indeed, all the goods were allocated, and further, these specialized goods were allocated only to the vertex agents who like them and any dummy good was allocated to the unique agent from $S$ who had a non-zero valuation for it. The popular goods are liked by everyone, and in particular the agents who received them.

    We now address envy-freeness. Consider $v \in X$:  the corresponding agent receives one of the specialized goods and its adjacent edges get at most one popular good each, thus there is no envy for $v$. If $v \notin X$, the corresponding agent doesn't get anything but neither do any of its adjacent edges, so again, there is no envy. Now consider $e \in E$. These agents are connected only to agents in $V \cup S$, who receive either specialized or dummy goods --- since agents in $E$ do not value these goods, there is no envy.

    The agents in $S$ each have received a dummy good and all the edges connected to them have received only one popular good, so they don't envy any of their neighbors. Finally, the agents in $W$ have not been allocated anything. They are connected to vertices from $V$ who are either also empty-handed or have specialized goods that agents from $W$ don't value. Thus no envy for agents in $W$. This concludes the argument.
    \end{proof}
\end{longver}
    This concludes our description of a fair and complete allocation strategy given a clique in $G$. We now turn to the reverse direction, where we are given an allocation $\pi$ that is Pareto-efficient and envy-free with respect to $H$. It is useful to make the following observation about $\pi$ to begin with.

    \begin{claim}
        Let $H$ be defined as above, and let $\pi$ be an allocation that is Pareto-efficient and envy-free with respect to $H$. Then, any popular good is assigned by $\pi$ to an agent from $E$. Further, no agent in $E$ can receive more than one popular good in the allocation $\pi$.
    \end{claim}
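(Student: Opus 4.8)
The plan is to verify the two assertions about the allocation $\pi$ by tracking the flow of popular goods through the ``chain'' of forced envy constraints, using the dummy items to pin down the agents in $S$. First I would record the elementary fact that, because $\pi$ is Pareto-efficient (hence complete and non-wasteful under $0/1$ valuations), every popular good is assigned to \emph{some} agent who values it; since all agents value the popular goods, this alone says nothing, so the real work is showing it cannot land on an agent of $V \cup S \cup W$. I would dispatch $W$ and $V$ first: each agent $w_{ij}$ sits in a clique of size $\ell+1$ together with the other $w_{ij'}$, and is also adjacent to $v_i$; if one agent in $\{v_i\} \cup \{w_{ij} : j \in [\ell+1]\}$ received a popular good, envy-freeness along the clique edges (and the edge to $v_i$) would force \emph{every} agent in this set of size $\ell+2$ to receive a good they value, i.e.\ a popular or specialized good --- but there are only $\ell + k < \ell + 2 + \cdots$ such goods available once one accounts for the fact that this must hold simultaneously for the large independent collection of $W$-cliques; more carefully, each $w$-agent only values popular goods, so within a single $W$-clique of size $\ell+1$ all $\ell+1$ agents would need a popular good, already exhausting all $\ell$ popular goods — contradiction. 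Hence no $w$-agent, and then no $v$-agent (its $\ell+1$ clique-neighbours in $W$ would each need a popular good), receives a popular good.

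Next I would handle $S$. Here the dummy items do the pinning: $d_i$ is valued only by $s_i$, and by non-wastefulness each $d_i$ must be allocated to $s_i$. Now suppose some $s_i$ also received a popular good. Then $s_i$ has value at least $2$. Every edge-agent $e$ is adjacent to all of $S$; so for envy-freeness $e$ would need $\nu_e(\pi(e)) \ge \nu_e(\pi(s_i))$. But $e$ values only popular goods, so $\pi(s_i)$ contributes exactly its popular goods to this comparison; with $s_i$ holding one popular good, every edge-agent would need at least one popular good, which forces $m' \ge m \ge \binom k2 = \ell$ edge-agents each to hold a popular good — and there are only $\ell$ popular goods, so in fact \emph{every} popular good goes to an edge-agent and none is left for $s_i$, a contradiction. (If one prefers, the cleaner route is: popular goods not at $E$ can only be at $V \cup S \cup W$, we have ruled out $V \cup W$, and the $S$-argument just given rules out $S$.) Combining, every popular good is held by an agent of $E$.

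For the second assertion — no edge-agent holds two popular goods — suppose $e = (v_i,v_j)$ holds two popular goods. Every agent in $S$ is adjacent to $e$ and values popular goods, so envy-freeness forces each of the $\ell+1$ agents $s_1,\dots,s_{\ell+1}$ to hold at least two goods they value; but $s_t$ values only $d_t$ and the popular goods, so each $s_t$ needs at least one popular good in addition to (or instead of) $d_t$ — that is at least $\ell+1$ popular goods required, exceeding the $\ell$ available. Contradiction. Hence each edge-agent receives at most one popular good.

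The main obstacle I anticipate is getting the counting in the $W$/$V$ step exactly right: one must be careful that the clique on $\{w_{ij}\}_{j}$ has size $\ell+1$ (not $\ell$), so that a single popular good placed anywhere in the closed neighbourhood of that clique propagates to $\ell+1$ agents all of whom value only popular goods, immediately overflowing the $\ell$ popular goods — and symmetrically that placing a popular good at $v_i$ propagates into its $W$-clique. Once the ``clique of $w$'s is one larger than the popular-good budget'' observation is isolated, the rest is bookkeeping; the $S$-part is then a short variant using the $\ell+1$ agents of $S$ all adjacent to a common edge-agent, plus the dummy-item pinning from non-wastefulness.
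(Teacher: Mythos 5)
Your proof is correct and follows essentially the same route as the paper: a case analysis ruling out $V$, $W$, and $S$ as holders of popular goods via envy propagation into the size-$(\ell+1)$ $W$-cliques (resp.\ the $\ell$-plus edge agents), followed by the same counting against the $\ell+1$ agents of $S$ to forbid two popular goods at one edge agent. The only cosmetic differences are the order of the cases and your (harmless, though unnecessary for the first assertion) pinning of each dummy good $d_i$ to $s_i$; the brief false start about ``$\ell+2$ agents needing popular or specialized goods'' is superseded by your corrected count and does not affect validity.
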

\begin{longver}
    \begin{proof}
    Let us assume that there exists a popular good $q$ that is not assigned to an agent from $E$. This means that, since $\pi$ is complete, $q$ is assigned to an agent from $V$, $S$, or $W$.
    \begin{itemize}
        \item If $q$ was assigned to an agent $v_i$ from $V$, then all the vertices $w_{ij}$ for $j \in [\ell+1]$ will envy $v_i$. Note that there are $\ell+1$ such agents, each of which only value the popular goods, and the total number of (remaining) popular goods is at most $\ell-1$, the allocation $\pi$ cannot account for all the envy generated in this situation.

        \item If $q$ was assigned to an agent $w_{ij}$ from $W$, then all the other agents in $W$ belonging to the clique induced by $N(v_i)$ will envy $w_{ij}$. There are $\ell$ such agents, each of which only value the popular goods, and the total number of (remaining) popular goods is at most $\ell-1$, and again, the allocation $\pi$ cannot account for all the envy generated in this situation.

        \item If $q$ was assigned to an agent $s_i$ from $S$, then all the agents in $E$ will envy $q$. Since $|E| \geq \ell$, and all agents in $E$ only value the popular goods, the argument is the same as in the previous case.
    \end{itemize}

    From the discussion above, we conclude that $\pi$ allocates the popular goods to agents in $E$. Now suppose there exists an agent $e$ in $E$ who received more than one popular good. Since $e$ is adjacent to all the $s_i$'s in $S$, the presence of more than one good for $e$ triggers envy in all the $s_i$ agents. There are $(\ell + 1)$ many agents in $S$, each of whom must now be assigned \emph{two} items that they value. However, apart from the unique dummy good that each $s_i$ values, the only goods that they value are the popular goods, of which only $(\ell - 2)$ remain --- therefore, we fall short of accounting for the envy of $(\ell + 1)$ agents. This shows that no agent in $E$ can receive more than one popular good.
    \end{proof}
\end{longver}

    Since $\pi$ is non-wasteful, the specialized goods must be distributed among agents corresponding to $V$. The following is easy to see.
    

    \begin{claim}
        Let $H$ be defined as above, and let $\pi$ be an allocation that is Pareto-efficient and envy-free with respect to $H$. No agent in $V$ can receive more than one specialized good in the allocation $\pi$.
    \end{claim}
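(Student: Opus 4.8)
The plan is to derive the claim not from any local argument about $v_i$ itself, but from the previous claim (every popular good goes to an agent of $E$, at most one per such agent) together with the already-noted fact that all specialized goods go to agents of $V$. There is no purely local obstruction: a single $V$-agent holding two specialized goods creates no envy at all, since every neighbour of a $V$-agent in $H$ — the incident edge-agents and the clique $\{w_{ij}\}_{j\in[\ell+1]}$ — values only popular goods, and by the previous claim a $V$-agent holds no popular good. Hence the bound must come from a global counting argument tying the specialized goods to the (scarce) popular goods.

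First I would pin down how the popular goods are spread. By the previous claim every popular good goes to an agent of $E$ and no such agent gets more than one; since there are $\ell=\binom{k}{2}$ popular goods and $|E|=m\ge\binom{k}{2}$, at least $\binom{k}{2}$ distinct edge-agents receive exactly one popular good each. Let $E^\star$ be such a family of edge-agents, so the underlying edges form $\binom{k}{2}$ distinct edges of $G$. Next I would show that for every $e=(v_i,v_j)\in E^\star$, both $v_i$ and $v_j$ receive a specialized good: $e$ is adjacent in $H$ to $v_i$, and $\pi(e)$ contains a popular good which $v_i$ values at $1$, so envy-freeness forces $\nu_{v_i}(\pi(v_i))\ge 1$; since $v_i\in V$, non-wastefulness forbids $v_i$ a dummy good and the previous claim forbids $v_i$ a popular good, so $\pi(v_i)$ is a nonempty set of specialized goods. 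Symmetrically for $v_j$.

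Let $X\subseteq V$ be the set of $V$-agents holding at least one specialized good. The previous step places all $\binom{k}{2}$ edges underlying $E^\star$ inside $X$, so $G[X]$ has at least $\binom{k}{2}$ edges; since a graph on $t$ vertices has at most $\binom{t}{2}$ edges, this forces $|X|\ge k$. On the other hand there are exactly $k$ specialized goods, each allocated (by Pareto-efficiency, i.e.\ completeness plus non-wastefulness) to a $V$-agent, so at most $k$ distinct $V$-agents hold a specialized good, giving $|X|\le k$. Hence $|X|=k$, and the $k$ unshared specialized goods are distributed among exactly $k$ agents, so each agent of $X$ receives exactly one and every other $V$-agent receives none. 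In particular no $V$-agent receives more than one specialized good, which is the claim; as a by-product $X$ induces a $k$-vertex subgraph of $G$ with at least $\binom{k}{2}$ edges, hence a $k$-clique, which is exactly what the reverse direction needs.

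I do not expect a real obstacle: the substantive work was already done in the previous claim. The only care required is routine bookkeeping — using only the inequality $|E^\star|\ge\binom{k}{2}$ rather than exact equality, and invoking the trivial bound $|E(G[X])|\le\binom{|X|}{2}$ to pass from "many edges inside $X$" to "$|X|$ large".
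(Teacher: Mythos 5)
Your proposal is correct and follows essentially the same route as the paper: the $\binom{k}{2}$ edge-agents holding popular goods span at least $k$ vertices of $V$, each such vertex-agent must receive a specialized good to resolve its envy (popular goods being unavailable and dummy goods excluded by non-wastefulness), and with only $k$ specialized goods in total the pigeonhole bound gives exactly one per agent. Your write-up just makes the counting ($|X|\ge k$ via $\binom{|X|}{2}\ge\binom{k}{2}$, $|X|\le k$ via the number of specialized goods) more explicit than the paper's brief version.
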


\begin{longver}
    \begin{proof}
    We know that $\ell = {k \choose 2}$ many edge agents have each received a popular good. These edges combined span at least $k$ many agents in $V$ (recall that $G$ is a simple graph). Each of those $k$ agents suffer from envy as they are adjacent to edges who possess popular goods that the vertices also like. To satisfy this envy, each of these $k$ vertices must be assigned a specialized good --- indeed, the popular goods, the only other possibility, are all taken. The claim follows from the fact that there are only $k$ specialized goods.
    \end{proof}
\end{longver}
    Let $X \subseteq V$ be the subset of $k$ agents that receive at least one specialized item and let $Y \subseteq E$ be the subset of $\ell$ agents that receive at least one popular item with respect to $\pi$ . We claim that $G[X]$ is a clique. In particular, we claim that every edge of $Y$ has both its endpoints in $X$. Indeed, suppose not, and let $e \in Y$ be an edge with at least one endpoint (say $v$) outside $X$. Then, $v$ envies $e$, which contradicts our assumption about $\pi$ being envy-free with respect to $H$.
    \end{proof}

\subsection{W-hardness parameterized by vertex cover}

Recall that a vertex cover of a graph $G = (V,E)$ is a subset $S \subseteq V$ such that $G \setminus S$ is an independent set (i.e, for any pair of vertices $u,v \in G \setminus S$, $(u,v) \notin E$). In the setting of directed graphs with arbitrary utilities, finding \GEFA{} allocations is NP-hard even for graphs that have a constant-sized vertex cover. Here, we show that in the setting of binary utilities, finding a \GEFA{} allocation is $W[1]$-hard when parameterized by the vertex cover of the underlying graph. 

\begin{theorem}
    \label{thm:vcwhard}
    The \textsc{\GEFA} problem is $W[1]$-hard when parameterized by the vertex cover of the underlying graph, even when agents have 0/1 valuations over the goods.
\end{theorem}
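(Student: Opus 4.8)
The plan is to give a parameterized reduction from \textsc{Multicolored Clique} --- given a graph $G$ whose vertex set is partitioned into $k$ colour classes $V_1,\dots,V_k$, each of size $N$ after padding, decide whether $G$ has a clique using exactly one vertex from every class --- which is \WOH{} parameterized by $k$. The reduced \GEFA{} instance $\JJ_\II$ is built so that it admits a vertex cover of size $1+k+\binom{k}{2}=O(k^2)$; since this is a function of $k$ alone, \WO-hardness of \GEFA{} parameterized by the vertex cover number follows.

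The governing principle is that a graph of small vertex cover is a small ``core'' $S$ together with an independent set, so every agent outside $S$ may be adjacent only to agents of $S$, and all the ``bulk'' of the construction (something per vertex and per edge of $G$) must live in the independent part. Accordingly, $S$ consists of a \emph{trigger} agent $z$, one \emph{colour hub} $h_i$ per colour $i$, and one \emph{pair hub} $p_{ij}$ per pair $\{i,j\}$, with $z$ adjacent to all hubs and each $h_i$ adjacent to every $p_{ij}$. Outside $S$ we place a \emph{vertex agent} $a_v$ for each $v\in V$, where $a_v$ is adjacent to $h_i$ when $v\in V_i$ and to every $p_{ij}$ with $j\neq i$; and an \emph{edge agent} $b_e$ for each cross-class edge $e=\{u,w\}$, $u\in V_i$, $w\in V_j$, adjacent to $p_{ij}$. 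The goods are: one \emph{colour token} $c_i$ per colour, valued exactly by the vertex agents of $V_i$; one \emph{pair token} $d_{ij}$ per pair, valued exactly by the edge agents of that pair; a family of \emph{consistency goods} coupling each $a_v$ to the edge agents incident with $v$; and \emph{padding goods} attached to the hubs and to $z$, whose sole purpose is to make the intended allocation complete and non-wasteful and to give the envy inequalities at the hubs and pair hubs exactly the slack they need.

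From a multicoloured clique $\{v_1,\dots,v_k\}$ the intended allocation gives $c_i$ to $a_{v_i}$, gives $d_{ij}$ together with the consistency goods certifying its two endpoints to $b_{\{v_i,v_j\}}$, and distributes the padding goods among the hubs and $z$, leaving every other agent empty; one checks directly that this is complete, non-wasteful (hence Pareto-efficient), and graph-envy-free. For the converse, let an envy-free Pareto-efficient allocation $\pi$ be given. Completeness and non-wastefulness force each $c_i$ onto some vertex agent of $V_i$ and each $d_{ij}$ onto some edge agent of the pair, so $\pi$ ``selects'' one vertex per colour and one edge per pair. The trigger $z$ plays the role of the agent $s$ in the proof of Theorem~\ref{thm:2agenttypes}: holding a padding good of its own, it lets us start a chain of envy through the hubs that rules out any allocation in which a hub, a vertex agent, or an edge agent is given ``too much'', thereby pinning down the shape of every candidate allocation and forcing selected and non-selected agents into their intended bundles. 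Finally, the consistency goods and the tuned valuations of the pair hubs are designed so that the envy inequalities at $p_{ij}$ --- between $p_{ij}$ and the selected edge agent, and between $p_{ij}$ and the vertex agents of $V_i\cup V_j$ --- are simultaneously satisfiable exactly when the selected edge for $\{i,j\}$ has endpoints equal to the selected vertices of colours $i$ and $j$. Hence the selected vertices form a multicoloured clique, which completes the equivalence.

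I expect the third point to be the main obstacle: engineering the consistency gadget using only $0/1$ valuations and only the one-directional leverage that envy supplies --- an envy inequality caps how much a neighbour may hold of goods one approves, but does not directly force an agent to hold a given good. The delicate balance is that (i) every non-selected vertex agent and edge agent ends up genuinely empty-handed and envies no one, while (ii) the selected ones are squeezed, through their hub neighbours and the consistency goods, into holding precisely the goods that encode a mutually consistent choice, and (iii) each hub and pair hub can be satisfied when, and only when, that choice is consistent. A routine but necessary final check is that the padding is chosen so the intended allocation is complete and non-wasteful, and that the whole instance --- in particular the vertex cover $\{z\}\cup\{h_i\}_i\cup\{p_{ij}\}_{i<j}$ of size $O(k^2)$ --- has size polynomial in the input with parameter bounded by a function of $k$.
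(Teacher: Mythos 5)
There is a genuine gap, and you have named it yourself: the ``consistency gadget'' that is supposed to force the selected edge for each pair $\{i,j\}$ to have endpoints equal to the selected vertices of colours $i$ and $j$ is never constructed --- it is exactly the heart of the reduction, and without it the reverse direction cannot be checked. Worse, the shape you propose for it looks unworkable with additive $0/1$ valuations: graph-envy-freeness gives you, at the pair hub $p_{ij}$, only \emph{separate} inequalities $\nu_{p_{ij}}(\pi(p_{ij})) \geq \nu_{p_{ij}}(\pi(a_v))$ and $\nu_{p_{ij}}(\pi(p_{ij})) \geq \nu_{p_{ij}}(\pi(b_e))$, i.e.\ per-neighbour caps on counts of approved goods. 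A single hub with one additive $0/1$ valuation cannot express a \emph{correlation} between what the vertex agent holds and what the edge agent holds (``the endpoints of the edge encoded at $b_e$ match the vertex tokens given out for colours $i$ and $j$''), since each neighbour is constrained independently and the hub's valuation cannot distinguish which specific approved good a neighbour carries beyond counting. Also, since $a_v$ and $b_e$ are non-adjacent (they must be, to keep the vertex cover small), no envy acts between them directly, so all the coupling must be mediated by the hubs --- which brings you straight back to the obstacle above.

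The paper's construction resolves exactly this difficulty by replacing each of your single pair hubs with a family of $m$ \emph{guard} agents $x_{ij}^{1},\ldots,x_{ij}^{m}$, one per edge of $G$, all adjacent to the two key agents $u_i,u_j$ (the $k$ key agents alone form the vertex cover, so there is no need for edge agents, consistency goods, or a trigger). Guard $x_{ij}^{\ell}$ values every core (vertex) good \emph{except} the two corresponding to the endpoints of $e_\ell$, and the pair $(i,j)$ has only $m-1$ dummy goods available for its guards. Hence in any envy-free Pareto-efficient allocation one guard per pair must be left empty-handed, and the only way it does not envy $u_i$ or $u_j$ is if the core goods held by $u_i$ and $u_j$ are precisely the endpoints of that guard's edge --- i.e.\ the selected vertices are adjacent. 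The incidence check is thus encoded in \emph{which} guard is starved, not in any matching of goods across neighbours, which is what makes it expressible with $0/1$ valuations. Your high-level framing (small core plus independent bulk, tokens selecting one vertex per colour) is compatible with this, but as written the proposal stops short of the one idea the theorem actually needs.
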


\begin{proof}
    We describe a reduction from the $W[1]$-hard problem \textsc{Clique}, given a graph $G$ and an integer $k$, does there exist a clique on $k$ vertices in $G$. Let $\II = (G, k)$ be an instance of clique, where $G = (V,E)$ and further, $V = \{v_1, \ldots, v_n\}$ and $E = \{e_1, \ldots, e_m\}$. We begin by describing the construction of the reduced instance. 

    To begin with, we introduce $k$ ``key'' agents $u_1, \ldots, u_k$, and for every pair $1 \leq i, j \leq k$, we introduce $m$ agents $x_{ij}^1, \ldots, x_{ij}^m$ --- these agents are all adjacent to $u_i$ and $u_j$. We call these agents the $(i,j)$ ``guard'' agents. We also introduce $n-k$ ``residual'' agents and impose a complete bipartite graph between the key and residual agents, that is, every residual agent is adjacent to every key agent. Note that this graph has a vertex cover of size $k$ given by the key agents. 

    We now turn to the items. We introduce a ``core'' item $g_i$ for each vertex $v_i \in V$. The key agents and residual agents value all core items. We also introduce, for every pair $1 \leq i, j \leq k$, $m-1$ dummy items valued by all $(i,j)$ guard agents. Finally, the guard agent $x_{ij}^\ell$ values all the core items \emph{except} the ones corresponding to the endpoints of the edge $e_\ell$. This completes the description of the reduction.
    
    In the forward direction, given a clique $S \subseteq V$, we assign core items corresponding to the clique to the key agents, and the remaining core items to the residual agents. If the key agents $a_i$ and $a_j$ are assigned goods $g_p$ and $g_q$, then let $\ell_{ij}$ denote the edge $(v_p,v_q)$. We say that the index $\ell_{ij}$ is ``special'' for the $(i,j)$ guard agents. Now, for every guard agent $x_{ij}^\ell$ except the special index, we assign the corresponding dummy item. Notice that when $\ell = \ell_{ij}$, the guard $x_{ij}^\ell$ does \emph{not} envy either of the key agents it is adjacent to, since it has an utility of one for all goods except for $g_p$ and $g_q$. 

    In the reverse direction, it is easy to check that due to the limited number of core goods, no locally envy-free allocation gives a guard agent a core good (indeed, this can be seen to induce a chain of envy that leaves us with having to allocate $n-1$ goods among $n$ agents which is not feasible). It is also easy to check that all the key and residual agents get exactly one core good each. We claim that the vertices corresponding to the core goods assigned to the key agents, say $S$, form a clique. Indeed, suppose not, and let $v_p$ (the vertex corresponding to the good assigned to key agent $a_i$) and $v_q$ (the vertex corresponding to the good assigned to key agent $a_j$) be two non-adjacent vertices in $S$. Then it follows that all the $(i,j)$ guard agents envy both $a_i$ and $a_j$, but given that no core goods are allocated to the guard agents and there are only $(m-1)$ dummy goods that can be assigned among these guards, it is inevitable that at least one of the guards will envy one of these key agents, contradicting our assumption of local envy-freeness. 
\end{proof}

\subsection{NP-hardness on paths}

To show the hardness of \GEFA even when the underlying graph is a path, we reduce from a variant of SAT called \textsc{Linear SAT} (abbreviated \textsc{LSAT}). In an \LSAT{} instance, each clause has at most three literals, and further the literals of the formula can be sorted such that every clause corresponds to at most three consecutive literals in the sorted list, and each clause shares at most
one of its literals with another clause, in which case this literal is extreme in both clauses. The hardness of LSAT was shown in~\cite{Arkin2015}. In fact, by studying the reduced instance, one may assume that a ``hard'' instance of \LSAT{} has the following structure: the first $2q$ clauses have two literals each and are of the following form:

$$A_i = \{s_i, \ell_i\}, B_i = \{\ell_i, t_i\}; 1 \leq i \leq q,$$

where $s_i, \ell_i,$ and $t_i$ denote literals, while the remaining $p$ clauses have three literals each and are mutually disjoint from each other as well as the first $2q$ clauses. For ease of description, we will assume that the \LSAT{} formula that we reduce from has this particular structure. We are now ready to describe our reduction --- in the interest of simplicity, our proof is designed to address the case when the graph is a disjoint union of paths, although it is easy to ``stitch'' these components into a single, longer path, as we will explain later.

\begin{theorem}
    \label{thm:nphpaths}
    The \GEFA problem is \NPC{} even when the graph induced by the agents is a disjoint union of paths, and further, agents have 0/1 valuations over the goods.
\end{theorem}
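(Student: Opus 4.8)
Membership of \GEFA{} in \NP{} is immediate: an allocation is a polynomial-size certificate, and graph envy-freeness (one inequality per edge) together with Pareto-efficiency---equivalently, for $0/1$ valuations and goods each valued by someone, completeness plus non-wastefulness---can be checked in polynomial time. So it remains to establish \NP-hardness, and the plan is to reduce from \LSAT{} in the normal form recalled above: the first $2q$ clauses are the pairs $A_i = \{s_i,\ell_i\}$, $B_i = \{\ell_i,t_i\}$ sharing the extreme literal $\ell_i$, and the remaining $p$ clauses have three literals each and are mutually independent from the rest.

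The plan is to attach to each variable a constant-size \emph{variable gadget} and to each clause a constant-size \emph{clause gadget}, each being a short path of agents with $0/1$ valuations, and to wire the gadget of a variable to the gadget of every clause containing it. The linear structure of \LSAT{}---the literals can be ordered so that every clause is a block of at most three consecutive literals and distinct clauses overlap only in extreme literals---is precisely what lets this wiring be laid out so that the whole agent graph $G$ is a disjoint union of paths (possibly a single long one): one places the clause gadgets along a line following the sorted literal list, threads each variable gadget between the (by the normal form, nearby) occurrences that mention it, and leaves any pieces that the normal form leaves independent as separate path components. The reduction is plainly polynomial.

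The technical heart is a \emph{binary-choice gadget}: a short path fragment whose only EEF allocations come in exactly two shapes, which we read off as the truth value of the variable. The mechanism is a ``switch'' good $g$ (valued only by a handful of designated agents) that Pareto-efficiency forces to be allocated but whose placement propagates envy---whoever receives $g$ makes an adjacent agent who also values $g$ envious, so that agent must be compensated with a good it values, and by making those compensating goods again switch goods of the gadget, the envy chain closes up consistently in precisely two ways. We design the gadget so that the state coding ``true'' leaves it holding a designated \emph{spare} good (two spares for the doubly-used literal $\ell_i$, so a true $\ell_i$ can serve both $A_i$ and $B_i$) that it may legitimately export to an adjacent clause gadget, while the ``false'' state has no spare; and the fragment linking the two occurrences of a variable is built so that its own envy constraints are satisfiable exactly when those occurrences get opposite states. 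The gadget for a clause $C$ contains a \emph{clause agent} $w$ valuing one more good than the gadget supplies internally, so in any EEF allocation $w$ must import a good from the gadget of one of $C$'s literals---possible exactly when that literal is set true---whence the clause gadget admits an EEF allocation if and only if $C$ has a true literal.

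Both directions are then routine checks. Forward: from a satisfying assignment, put each variable gadget into the matching state, route one spare from a true literal into each clause gadget, and place every remaining private/dummy good with its unique owner; then verify edge by edge that no agent envies a neighbour and that every good goes to an agent who values it, so the allocation is EEF. Reverse: an EEF allocation forces each variable gadget into one of its two states (defining an assignment) and forces each clause gadget to import a spare, so every clause has a true literal, i.e. the assignment satisfies the formula. This proves hardness for disjoint unions of paths; for a single path (Corollary~\ref{thm:nphpathscor}) one concatenates the components, separating consecutive ones by a fresh \emph{buffer} agent owning a single good valued by nobody else---such an agent is never envied and never envies---so EEF allocations of the concatenation correspond bijectively to EEF allocations of the disjoint union. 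I expect the main obstacle to be the exact calibration of the variable gadget: the valuations and the multiplicities of switch/spare/dummy goods must be chosen so that \emph{precisely} the two intended configurations are simultaneously envy-free and Pareto-efficient---no leak that lets a clause be served without a true literal, no spurious third configuration---and so that the envy inequality is genuinely used in \emph{both} directions along the path, which is exactly what makes the undirected-path case hard while the directed/DAG version (where one source absorbs everything) is trivial. Keeping the global good count balanced, so that completeness and non-wastefulness stay compatible with every forced local placement---especially around the doubly-used literals and the consistency links---is where the care lies; the rest is bookkeeping over edges.
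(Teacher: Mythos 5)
Your proposal is a strategy sketch rather than a proof: the entire technical content of this theorem is the concrete construction, and you leave exactly that part open. The ``binary-choice gadget'' is described only by the properties you would like it to have (two admissible EEF states, a spare good in the true state, a consistency fragment between occurrences), and you yourself flag that calibrating it so that no spurious third configuration and no ``leak'' exists is the main obstacle. Since no agents, goods, valuations or envy analysis are actually specified, neither direction of the equivalence can be checked, so the reduction is not established. For comparison, the paper's gadgets are very small and avoid most of the calibration you worry about: each variable $x_i$ gives a two-vertex component $Y_i\mbox{--}X_i$ where $Y_i$ values only a private good $y_i$ and $X_i$ values $\{x_i,\bar x_i,y_i\}$; each isolated clause gives $D_i\mbox{--}C_i$ with a private good $d_i$; each coupled pair gives a three-vertex path $G_i\mbox{--}A_i\mbox{--}B_i$ with a private good $g_i$. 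The private good forces the partner agent ($X_i$, $C_i$, $A_i$) to be non-empty (else it envies its dummy neighbour), and non-wastefulness then forces $X_i$ to hold one of $x_i,\bar x_i$, which simultaneously defines the assignment and guarantees that any literal good held by a clause agent is a true literal.

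A second, more substantive concern is your wiring plan. You propose to connect each variable gadget by edges to every clause gadget that mentions it, and to lay this out as a path using the sorted-literal structure of \LSAT{}. But that normal form only makes each \emph{clause} a block of consecutive literal occurrences; it does not place the different occurrences of the same \emph{variable} near one another, and a variable occurring in three or more clauses would force a vertex of degree at least three, destroying the disjoint-union-of-paths structure you need. The paper sidesteps this entirely: there are \emph{no} edges between variable gadgets and clause gadgets, and all variable--clause interaction happens through the shared literal goods (which carry no locality constraint), so every component is an edge or a three-vertex path by construction, and the \LSAT{} structure is needed only to handle the one shared extreme literal $\ell_i$ inside the $G_i\mbox{--}A_i\mbox{--}B_i$ component. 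Your final remark on stitching components into a single path via a buffer agent who alone values a fresh dummy good does match the paper's Corollary~\ref{thm:nphpathscor}, but without a concrete, verified core gadget the argument as a whole has a genuine gap.
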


Membership in \NP{} is straightforward to check. We focus here on the reduction demonstrating hardness. Let $\phi$ be an instance of \LSAT{} over variables $\hat{X} := \{x_1, \ldots, x_n\}$ and clauses:

$$C := \{A_1, B_1, \ldots, A_q, B_q, C_1, \ldots, C_p\},$$

as described above. We refer to the first $2q$ clauses as the \emph{coupled} clauses and the remaining as \emph{isolated} clauses. We now turn to the construction of the reduced instance $\JJ_\phi := (\AGS,\RES, H, \VV)$. We define the set of goods $\RES$ as $\RES_{\hat{X}} \cup \RES_C$, where:
$$\RES_{\hat{X}} = \left\{y_1, \ldots, y_n, x_1, \ldots, x_n,  \bar{x}_1, \ldots, \bar{x}_n,   \right\},$$
and:
$$\RES_C = \left\{ g_1, \ldots, g_q, d_1, \ldots, d_p \right\}. $$

The set of agents $\AGS$ is given by $X~\cup~C~\cup~Y~\cup~G~\cup~D$, where $C$ is denoted in the same way as in the \LSAT{} instance, and further:

\begin{longver}
$$X = \{X_1, \ldots, X_n\}, Y = \{Y_1, \ldots, Y_n\}, G = \{G_1, \ldots, G_q\} \mbox{ and } D = \{D_1, \ldots, D_p\}.$$
\end{longver}
\begin{shortver}
$$X = \{X_1, \ldots, X_n\}, Y = \{Y_1, \ldots, Y_n\},$$
$$G = \{G_1, \ldots, G_q\} \mbox{ and } D = \{D_1, \ldots, D_p\}.$$
\end{shortver}

We now simultaneously describe the structure of the graph $H$ and the preferences of the agents.

\begin{figure}
    \centering
    \begin{tikzpicture}[
    roundnode/.style={circle, draw=green!60, fill=green!5, very thick, minimum size=8mm},
    bignode/.style={circle, draw=red!60, fill=red!5, very thick, minimum size=10mm}]

    \node[roundnode]      (dummy)          [label=below:$y_i$] {\large{$Y_i$}};
    \node[bignode]        (var)       [right=of dummy, label=below:$y_i$ $x_i$ $\bar{x}_i$] {\large{$X_i$}};
    \node[roundnode]      (dummyiso)          [right= of var, label=below:$d_i$] {\large{$D_i$}};
    \node[bignode]        (clauseiso)       [right=of dummyiso, label=below:$d_i$ $\jmath_1$ $\jmath_2$ $\jmath_3$] {\large{$C_i$}};
    \node[roundnode]      (dummycoup)          [right= of clauseiso, label=below:$g_i$] {\large{$G_i$}};
    \node[bignode]        (clausecoup1)       [right=of dummycoup, label=below:$g_i$ $s_i$ $t_i$ $\ell_i$] {\large{$A_i$}};
    \node[bignode]        (clausecoup2)       [right=of clausecoup1, label=below:$s_i$ $t_i$] {\large{$B_i$}};
    \draw[-] (dummy.east) -- (var.west);
    \draw[-] (dummyiso.east) -- (clauseiso.west);
    \draw[-] (dummycoup.east) -- (clausecoup1.west);
    \draw[-] (clausecoup1.east) -- (clausecoup2.west);
    \end{tikzpicture}
    \caption{A schematic of the reduced instance in the proof of Theorem~\ref{thm:nphpaths}, which is a disjoint union of paths. The vertex labels denote the agents while the goods they approve are indicated just below. Here, $\jmath_1, \jmath_2,$ and$\jmath_3$ denote the items corresponding to the literals that belong to the clause $C_i$.} 
\end{figure}
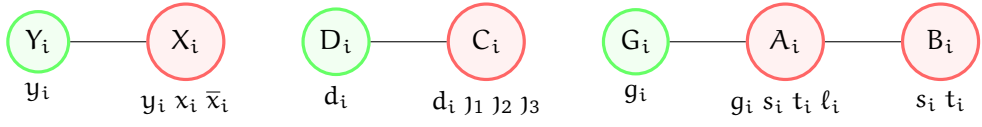
\textbf{Assignment Gadgets.} For each $1 \leq i \leq n$, add an edge between $X_i$ and $Y_i$. The agent $X_i$ values $\{x_i, \bar{x}_i, y_i\}$, while $Y_i$ values the good $y_i$ (and nothing else).

\textbf{Isolated Clause Gadgets.} For each $1 \leq i \leq p$, we add an edge between agents $C_i$ and $D_i$. The agent $C_i$ values the literal $\ell$ if and only if $\ell \in C_i$ along with $d_i$, while $D_i$ values the good $d_i$ (and nothing else).

\textbf{Coupled Clause Gadgets.} For each $1 \leq i \leq q$, we add an edge between agents $A_i$ and $B_i$, and also an edge between $G_i$ and $A_i$. The agent $G_i$ values the good $g_i$ (and nothing else). Agents $A_i$ and $B_i$ value, respectively, the goods $\{g_i, s_i, t_i, \ell_i\}$ and $\{s_i,t_i\}$.

This completes the description of the construction. 

We now discuss the equivalence. In the forward direction, given a satisfying assignment $\tau$ for $\phi$, we define the allocation $\pi$ as follows:

\begin{itemize}
    \item For all $1 \leq i \leq n$, $\pi(Y_i) = \{y_i\}$ and:
    \begin{equation*}
        \pi(X_i) =
        \begin{cases}
          x_i & \text{if } \tau(x_i) = 0,\\
          \bar{x}_i & \text{if } \tau(x_i) = 1.
        \end{cases}
    \end{equation*}

    Observe that the literals that evaluate to true under $\tau$ are not assigned to any of the agents from $X$.
    \item For all $1 \leq i \leq p$,  $\pi(D_i) = \{d_i\}$ and $\pi(C_i)$ is the item corresponding to one of the literals of $C_i$ that evaluates to $1$ under $\tau$. If there are multiple literals that evaluate to $1$ under $\tau$, allocate all items corresponding to said clause to the agent $C_i$.
    \item For all $1 \leq i \leq p$,  $\pi(G_i) = \{g_i\}$. Further, if $\tau$ sets $\ell_i$ to true, then $\pi(A_i) = \{\ell_i\}$ and $\pi(B_i) = \emptyset$. However, if $\tau$ sets $\ell_i$ to false, then note that it must be the case that the literals $s_i$ and $t_i$ evaluate to true under $\pi$, and we assign $\pi(A_i) = \{s_i\}$ and $\pi(B_i) = \{t_i\}$.
    \item  If both $s_i$ and $t_i$ evaluate to true under $\tau$, then they can be distributed one each among the agents $A_i$ and $B_i$, and if exactly one of them is set to true then the good corresponding to the true literal can always be given to $B_i$.
\end{itemize}

It is easy to verify that $\pi$ is well-defined: this follows from the fact that the clauses are almost disjoint. It is straightforward to check that the allocation is complete and non-wasteful (and hence Pareto-efficient). We now observe that the allocation is also envy-free: indeed, the agents involved in paths that are edges get one item each while the agents involved in paths of length three either get one item each, or in the case when $B_i$ is empty-handed, $A_i$ receives a good for which $B_i$ has zero utility (and in this case, both $A_i$ and $G_i$ receive one good each), thus there is no envy.

In the reverse direction, let $\pi$ be an envy-free, Pareto-efficient allocation. Since $\pi$ is non-wasteful, note that the item $y_i$ is allocated by $\pi$ to either $Y_i$ or $X_i$. However, if $y_i \in \pi(X_i)$, then the agent $Y_i$ envies $X_i$, but since she approves of no item other than $y_i$, it is not possible for $\pi$ to resolve this envy. Therefore, combined with the fact that $\pi$ is non-wasteful, we conclude that $\pi(Y_i) = \{y_i\}$ for all $1 \leq i \leq n$. A similar argument establishes that $\pi(D_i) = \{d_i\}$ for all $1 \leq i \leq p$ and $\pi(G_i) = \{g_i\}$ for all $1 \leq i \leq q$.

Since $\pi(Y_i) = \{y_i\}$, we have that $\pi(X_i) \neq \emptyset$. By the non-wastefulness of $\pi$, we also know that $\pi(X_i) \subseteq \{x_i,\bar{x}_i\}$. We use this to define an assignment $\tau$ as follows:

\begin{equation*}
    \tau(x_i) =
    \begin{cases}
      1 & \text{if } \pi(X_i) = \{\bar{x}_i\},\\
      0 & \text{if } \pi(X_i) = \{x_i\},\\
      1 & \text{otherwise}.
    \end{cases}
\end{equation*}

It turns out that if $\pi(X_i) = \{x_i, \bar{x}_i\}$, then the setting of $\tau$ for $x_i$ is immaterial, and we set it to $1$ as suggested above as a matter of convention. We now argue that $\tau$ is a satisfying assignment. Suppose not, and in particular, consider a clause that is not satisfied by $\tau$. We address the cases of isolated and coupled clauses separately.

Let $C_i$ be an isolated clause not satisfied by $\tau$. Since $C_i$ approves $d_i$, $\pi(D_i) = d_i$, and $D_i \in N(C_i)$, we know that $\pi(C_i) \neq \emptyset$. In particular, this implies that one of the literals belonging to the clause $C_i$ was allocated to the agent $C_i$. As an example, suppose this literal was $\bar{x}_k$. This implies that $\bar{x}_k \notin \pi(X_k)$, which in turn implies that $\pi(X_k) = \{x_k\}$ and that $\tau(x_k) = 0$, contradicting our assumption that $C_i$ is not satisfied by $\tau$. A symmetric argument holds for the case when the literal in question was a positive literal as opposed to a negated one.

Now, let $A_i$ be a coupled clause not satisfied by $\tau$. If $\ell_i \in \pi(A_i)$, then by an argument similar to the previous case, we can conclude that $\tau(\ell_i) = 1$, which leads to a contradiction. Therefore, assume that $\ell_i \notin \pi(A_i)$. Since $A_i$ approves $g_i$, $\pi(G_i) = g_i$, and $G_i \in N(A_i)$, we know that $\pi(A_i) \neq \emptyset$. Therefore, it must be the case that at least one of $s_i$ or $t_i$ belongs to $\pi(A_i)$. This again implies that $\tau(s_i) = 1$ or $\tau(t_i) = 1$, which leads to the desired contradiction. The same argument works if we started with the coupled clause $B_i$ instead of $A_i$. This discussion concludes the proof.

We remark that it is possible to combine the connected components in the reduced instance above by simply introducing ``dummy connector agents'' that each value a corresponding dummy item and nothing else. All the arguments made above will work in exactly the same fashion since nobody would have reason to envy these newly introduced agents, and vice versa. Thus, we have the following corollary.


\begin{corollary}
    \label{thm:nphpathscor}
    The \GEFA problem is \NPC{} even when the graph induced by the agents is a path, and further, agents have 0/1 valuations over the goods.
\end{corollary}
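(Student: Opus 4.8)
The plan is to bootstrap from Theorem~\ref{thm:nphpaths}. That reduction already produces, from an \LSAT{} instance $\phi$, a fair division instance $\JJ_\phi := (\AGS, \RES, H, \VV)$ whose graph $H$ is a disjoint union of paths and which admits an envy-free, Pareto-efficient allocation if and only if $\phi$ is satisfiable. It therefore suffices to convert $\JJ_\phi$ into an equivalent instance whose graph is a single path, which I will do by stitching the components together with ``connector'' agents that can never participate in any envy relation.

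Concretely, let the path components of $H$ be $P_1, \ldots, P_r$ (so $r$ is polynomial in $|\phi|$), and for each $i$ fix endpoints $a_i$ and $b_i$ of $P_i$, allowing $a_i = b_i$ when $P_i$ is a single vertex. I would build $\JJ_\phi'$ from $\JJ_\phi$ by adding, for each $i \in [r-1]$, one fresh agent $z_i$ and one fresh good $f_i$, together with the edges $\{b_i, z_i\}$ and $\{z_i, a_{i+1}\}$; the agent $z_i$ values $f_i$ and nothing else, and no other agent values $f_i$. The resulting graph $H'$ is the path $P_1 - z_1 - P_2 - z_2 - \cdots - z_{r-1} - P_r$, and $\JJ_\phi'$ is computable from $\phi$ in polynomial time. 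Membership in \NP{} being immediate, it remains to check equivalence.

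For the forward direction, take a satisfying assignment of $\phi$ together with the envy-free, Pareto-efficient allocation $\pi$ of $\JJ_\phi$ guaranteed by Theorem~\ref{thm:nphpaths}, and extend it by $\pi(z_i) := \{f_i\}$ for every $i$. This allocation of $\JJ_\phi'$ is still complete and non-wasteful, hence Pareto-efficient; it is also envy-free, since each $z_i$ receives its only valued good and each neighbour $b_i$ or $a_{i+1}$ of $z_i$ assigns value $0$ to $f_i$, so no new envy is created. Conversely, let $\pi$ be an envy-free, Pareto-efficient allocation of $\JJ_\phi'$. Non-wastefulness forces $\pi(z_i) = \{f_i\}$, as $z_i$ is the unique agent valuing $f_i$, so the remaining agents receive precisely the goods of $\RES$; the restriction of $\pi$ to $\AGS$ is then a complete, non-wasteful (hence Pareto-efficient) allocation of $\JJ_\phi$, and since every edge of $H$ is an edge of $H'$ not incident to any $z_i$, envy-freeness with respect to $H'$ is inherited as envy-freeness with respect to $H$. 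Theorem~\ref{thm:nphpaths} then yields a satisfying assignment of $\phi$. The only point requiring any care is the verification that the connector agents are neither envious nor envied, which is immediate because each $f_i$ is private to $z_i$ and worth $0$ to its neighbours; everything else is inherited verbatim, so there is no real obstacle here.
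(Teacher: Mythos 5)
Your construction is exactly the stitching the paper sketches after Theorem~\ref{thm:nphpaths}: connector agents, each valuing only a private dummy good, join the path components into one path, and fairness and efficiency transfer in both directions just as you verify. The argument is correct and essentially identical to the paper's, merely spelled out in more detail.
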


\section{Proportionality for Graphs}

\subsection{Local Proportionality: NP-hardness}

\begin{theorem}
    \label{thm:nphlprop}
    The \LPA problem is \NPC{} on undirected graphs, even when all agents have $0/1$ valuations over the resources.
\end{theorem}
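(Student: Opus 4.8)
The plan is to establish membership in \NP{} and then to prove hardness by a reduction that reuses, essentially verbatim, the path construction from the proof of Theorem~\ref{thm:nphpaths}. Membership is routine: given a candidate allocation $\pi$ one checks in polynomial time that it is complete and non-wasteful (which for $0/1$ additive valuations is precisely Pareto-efficiency) and that the local-proportionality inequality $\nu_\ell(\pi(\ell)) \ge \frac{1}{d(\ell)+1}\sum_{j \in N[\ell]}\nu_\ell(\pi(j))$ holds at every agent $\ell$.

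For hardness I would reduce from \LSAT{}, using the same reduced instance as in the proof of Theorem~\ref{thm:nphpaths}: the assignment gadgets (an agent $X_i$ together with an ``owner'' $Y_i$ that approves only $y_i$), the isolated-clause gadgets ($C_i$ together with $D_i$), and the coupled-clause gadgets ($G_i$, $A_i$, $B_i$), stitched into a single path by private-good connector agents exactly as in Corollary~\ref{thm:nphpathscor}. The conceptual point that makes this work is the observation that, although local proportionality is in general strictly weaker than graph envy-freeness (any graph-envy-free allocation is locally proportional, but not conversely), rewriting the inequality as $d(\ell)\cdot\nu_\ell(\pi(\ell)) \ge \sum_{j \in N(\ell)}\nu_\ell(\pi(j))$ shows that it \emph{coincides} with graph envy-freeness at every degree-one agent, and that at a degree-two agent it still forces $\nu_\ell(\pi(\ell)) \ge 1$ whenever some neighbour holds a good that $\ell$ values. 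In the reduced instance every ``enforcer'' agent ($Y_i$, $D_i$, $G_i$, and the connectors) has degree one and approves a single private good, and every agent whose bundle has to be controlled ($X_i$, $C_i$, $A_i$, $B_i$) has degree at most two, so the weaker hypothesis still has enough bite.

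The forward direction is then immediate: from a satisfying assignment $\tau$ take the allocation constructed in the proof of Theorem~\ref{thm:nphpaths}; it is complete, non-wasteful, and in fact graph envy-free, hence locally proportional. The reverse direction is where the weaker hypothesis must be discharged, and re-running the forcing argument with local proportionality in place of envy-freeness is the main obstacle. Concretely I would verify: (i) $\pi(Y_i)=\{y_i\}$, $\pi(D_i)=\{d_i\}$, $\pi(G_i)=\{g_i\}$, and each connector keeps its private good, since these agents have degree one (or degree two with a zero-valued connector neighbour) and approve exactly one good, where the inequality reduces to envy-freeness; (ii) consequently $X_i$, $C_i$, $A_i$ are each adjacent to an agent holding a good they value, and the inequality $d(\ell)\cdot\nu_\ell(\pi(\ell)) \ge \sum_{j \in N(\ell)}\nu_\ell(\pi(j))$, with $d(\ell)\le 2$ and right-hand side at least $1$, forces $\nu_\ell(\pi(\ell)) \ge 1$, so each of these agents receives a good it values; (iii) non-wastefulness pins those goods down to literal goods, and the assignment read off from the $X_i$'s satisfies every isolated clause exactly as in Theorem~\ref{thm:nphpaths}. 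The most delicate sub-step is the coupled-clause case: for $A_i$ the right-hand side of its inequality can be as large as $2$, and one must additionally invoke local proportionality at $B_i$ (again of small degree) to conclude that both goods $s_i$ and $t_i$ are held by $A_i$ or $B_i$, which is what forces the pair of clauses $A_i,B_i$ to be satisfied; an analogous re-check is needed at the stitched path endpoints whose degree rose from one to two. A final bookkeeping point, inherited unchanged from Theorem~\ref{thm:nphpaths}, is to ensure that every literal of $\phi$ occurs in some clause so that ``true'' literal goods can always be placed non-wastefully, which is without loss of generality.
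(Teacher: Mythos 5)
Your proposal is correct, but it takes a genuinely different route from the paper. The paper proves Theorem~\ref{thm:nphlprop} by a fresh reduction from \threecol{}: there the forcing is done by agents of larger degree (edge-agents $E_i^b$ of degree three whose proportionality threshold is exactly calibrated so that a monochromatic edge pushes their neighbourhood value from four to five) together with a clique of padding agents $C_b$ that makes it impossible to compensate everyone, i.e.\ it exploits local proportionality precisely in the regime where it is strictly weaker than graph envy-freeness. You instead recycle the \LSAT{} path construction of Theorem~\ref{thm:nphpaths} and observe that on a graph of maximum degree two the rewritten inequality $d(\ell)\cdot\nu_\ell(\pi(\ell)) \ge \sum_{j \in N(\ell)}\nu_\ell(\pi(j))$, combined with integrality of $0/1$ valuations ($\nu \ge 1/2$ forces $\nu \ge 1$), makes local proportionality do the same forcing work as envy-freeness at every gadget agent; I checked the delicate spots you flag (the coupled-clause gadget, where LP at $B_i$ still pins $s_i$ or $t_i$ after LP at $A_i$ forces $\pi(A_i)\neq\emptyset$, and the degree increase caused by the connector agents, which contribute nothing to any neighbour's right-hand side) and they go through. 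What each approach buys: yours yields a strictly stronger statement, namely that \LPA is \NPC{} even when the underlying graph is a path, nicely parallel to Theorem~\ref{thm:nphpaths} and Corollary~\ref{thm:nphpathscor}, at the cost of hinging on the near-coincidence of LP and GEF on degree-$\le 2$ graphs (so it says little about proportionality as a distinct notion); the paper's \threecol{} reduction is self-contained and exhibits hardness on instances where local proportionality is genuinely looser than envy-freeness, but its graph contains high-degree agents and cliques rather than a path.
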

\begin{proof}
Membership in NP follows from the fact that an allocation serves as a certificate: indeed, given an allocation, it can be checked whether the allocation is local graph proportional in polynomial time. We show NP-hardness by a reduction from \threecol{}, where we are given a graph $G$ and the question is to determine if there exists a coloring the vertices with 3 colors such that no two adjacent vertices are coloured the same color. Equivalently, we would like to know if the vertex set of $G$ can be partitioned into 3 independent sets.

Let $G = (V,E)$ be an instance of \threecol{}, where $V = \{v_1, \ldots, v_n\}$ and $E = \{e_1, \ldots, e_m\}$. We will now describe the reduced instance, which we denote by $\II_G := (\AG, \RES, H = (\AG,F), \VV)$. We have the following items:

$$ \RES = V \cup E_1 \cup E_2 \cup E_3 \cup \{a_1, a_2, a_3\} \cup \{x_1, x_2, x_3\},$$

where $E_b = \{e_i^b ~|~ i \in [m]\}, b \in [3]$. We refer to the items in $V$ as the \emph{core} goods, the items in $E_b$ as \emph{guard goods of type $b$}. We also introduce the following agents

$$ \AG = F_1 \cup F_2 \cup F_3 \cup \{A_1, A_2, A_3\} \cup \{B_1, B_2, B_3\} \cup C_1 \cup C_2 \cup C_3,$$

where $F_b = \{E_i^b ~|~ i \in [m]\}, b \in [3]$ and $C_b = \{C_i^b ~|~ i \in [n+1] \cup \{0\}\}, b \in [3]$. The preferences of the agents are as follows:

    \begin{itemize}
        \item For each $b \in [3]$, the agent $A_b$ approves all the core goods and the item $a_b$.
        \item For each $b \in [3]$, the agent $B_b$ approves $x_b$.
        \item For an edge $e_i = (v_p,v_q)$ and $b \in [3]$ the agent $E_i^b$ approves the goods $\{e_i^b,v_p,v_q,x_b,a_b\}$.
        \item For each $b \in [3]$, the agents in $C_b$ approve all the core goods and the item $a_b$.
    \end{itemize}

The graph $H$ has the following structure. For each $b \in [3]$, the agent $A_b$ is adjacent to each agent in $F_b$. The agents $C_0^b$ and $B_b$ are adjacent to all agents in $F_b$. For each $i \in [n]$, the agents in $C_b$ induce a clique. We refer the reader to Figure~\ref{fig:lprop} for a schematic depiction of the graph. This completes the description of the construction. We now turn to a discussion of the equivalence of these instances.

In the forward direction, let a coloring of $G$ with three colors $\{1,2,3\}$ be given. We propose an allocation $\pi$ as follows. For any $b \in [3]$, we assign the core goods corresponding to vertices of color $b$ to $A_b$, along with the item $a_i$. For each $b \in [3]$ and $i \in [m]$, we assign the good $e_i^b$ to the agent $E_i^b$. For each $b \in [3]$, assign the good $x_b$ to the agent $B_b$. This describes a complete and non-wasteful allocation of $\RES$ among the agents in $\AG$, and we claim that the allocation is also locally proportional. We establish this by a case analysis. In the following, let $b \in [3]$ be fixed.

\begin{longver}
\begin{itemize}
\item No agent in the neighborhood of $A_b$ receives a good that is valued by $A_b$, and $A_b$ receives at least one good.
\item For any $i \in [m]$, the agent $E_i^b$ values five goods and receives one. Let $e_i = (v_p,v_q)$, without loss of generality, assume that $v_p$ is colored $b$. Indeed, it is possible that neither endpoint of $e_i$ is colored $b$, or that $v_q$ is colored $b$, in which case this argument works in the same way, but importantly, note that it is never the case that both $v_p$ and $v_q$ are colored $b$, since we started with a proper coloring.

Recall that the neighborhood of $E_i^b$ has three agents, whose allocations projected on the goods valued by $E_i^b$ are as follows:

$$ \pi(A_b) = \{a_b, v_p\}, \pi(B_b) = \{x_b\}, \mbox{ and } \pi(C_0^b) = \emptyset.$$

Therefore, the total valuation of $E_i^b$ for the items allocated in its closed neighborhood is four, and $d(E_i^b) + 1 = 4$. This implies that the allocation is locally proportional for $E_i^b$.

\item For any agent $C_i^b \in C_b$, observe that the agents recieve nothing, and no vertex in their closed neighborhood recieves anything valued by them.

\item No agent in the neighborhood of $B_b$ receives a good that is valued by $B_b$, and $B_b$ receives one good.

\end{itemize}
\end{longver}

This concludes the argument in the forward direction. In the reverse, let $\pi$ be a Pareto-efficient allocation that is locally proportional with respect to $H$. We make the following observations.

\begin{claim}For any $b \in [3]$, $\pi(B_b) = \{x_b\}$.
\end{claim}
\begin{longver}
\begin{proof}  These agents don't like any other good, and are connected with the edge agents $E_j^b$. If any of the edge agents receives the $x_b$ goods, then it will be impossible to satisfy the proportionality guarantee for $B_b$. None of the other agents value $x_b$, so the claim follows from the non-wastefulness of $\pi$.
\end{proof}
\end{longver}

\begin{claim}\label{claim:empty} For any $b \in [3]$, and $\pi(C_0^b) = \emptyset$. Further, $\pi(C_i^b) = \emptyset$ for all $i \in [n+1]$.
\end{claim}
\begin{longver}
\begin{proof} If not, then by non-wastefulness, $C_0^b$ receives some subset of the core goods and possibly the item $a_b$. However, since any of these goods are also valued equally by the agents $C_i^b$, for all $1 \leq i \leq n+1$, they must also receive at least one good that they value to satisfy their local proportionality constraints. However, given that between them they only value at most $n$ goods altogether (after excluding the goods allocated to $C_0^b$), it is impossible to extend this allocation to satisfy the constraints of all the agents. The second part of the claim follows by a similar argument.
\end{proof}
\end{longver}

\begin{claim} For any $b \in [3]$, no core good is assigned to an agent from $F_b$.
\end{claim}
\begin{longver}
\begin{proof} If a core good is assigned to any agent in $F_b$, then $\pi(C_0^b) \neq \emptyset$, which contradicts the previous claim.
\end{proof}
\end{longver}

\begin{claim} For any $b \in [3]$, $a_b \in \pi(A_b)$.
\end{claim}
\begin{longver}
\begin{proof} By non-wastefulness, $a_b \in \pi(X)$ for some $X \in \{A_b\} \cup F_b \cup C_b$. By Claim~\ref{claim:empty}, $X \in \{A_b\} \cup F_b$. However, if $a_b$ is assigned to some agent in $F_b$ then $\pi(C_0^b) \neq \emptyset$, which again contradicts Claim~\ref{claim:empty}. Therefore, it follows that  $a_b \in \pi(A_b)$.
\end{proof}
\end{longver}

\begin{claim} For any $b \in [3]$ and $i \in [m]$, $\pi(E_i^b) = \{e_i^b\}$.
\end{claim}
\begin{longver}
\begin{proof}
    This follows from the claims above and the non-wastefulness of $\pi$.
\end{proof}
\end{longver}

Consider the partition $(V_1, V_2, V_3)$ of $V$ given by the core goods assigned to agents $A_1, A_2$ and $A_3$, respectively --- in other words, $V_b$ is defined as the subset of vertices corresponding to the core goods assigned to $A_b$ for each $b \in [3]$. Note that the claims above imply that the core goods can only be assigned to agents $A_1, A_2$ and $A_3$, therefore, the proposed partition accounts for every vertex in $V$. We now claim that $G[V_b]$ is an independent set for all $b \in 3$. Indeed, suppose not. Let $e_i = (v_p,v_q)$ be an edge in $V_b$. Consider the agent $E_i^b$, who has received one item. By the claims above, and our understanding that $A_b$ has received both $v_p$ and $v_q$, we see that the total valuation of $E_i^b$ for the items allocated in its closed neighborhood is five, while its degree is three. This violates the local proportionality constraint for the agent $E_i^b$ and and concludes our argument.
\end{proof}

\begin{figure}[t]
    \begin{center}
    \begin{tikzpicture}[scale=0.65]
        \node[shape=circle,draw=black] (A1) at (0,5) {$A_b$};
        \node[shape=circle,draw=black] (e11) at (-4,0) {$E_1^b$};
        \node[shape=circle,draw=black] (e21) at (-2,0) {$E_2^b$};
        \node[shape=circle,draw=black] (e31) at (0,0) {$E_3^b$};
        \node[shape=circle,draw=black] (e41) at (2,0) {$E_4^b$};
        \node[shape=circle,draw=black] (e51) at (4,0) {$E_5^b$};
        \node[shape=circle,draw=black] (B1) at (3,-3) {$B_b$};
        \node[shape=circle,draw=black] (C1) at (-3,-3) {$C_0^b$};
        \node[shape=circle,draw=black] (c11) at (-4,-6) {$C_1^b$};
        \node[shape=circle,draw=black] (c12) at (-2,-6) {$C_2^b$};
        \node[shape=circle,draw=black] (c13) at (0,-6) {$C_3^b$};
        \node[shape=circle,draw=black] (c14) at (2,-6) {$C_4^b$};

        \path [-](A1) edge node[left] {} (e11);
        \path [-](A1) edge node[left] {} (e21);
        \path [-](A1) edge node[left] {} (e31);
        \path [-](A1) edge node[left] {} (e41);
        \path [-](A1) edge node[right] {} (e51);
        \path [-](B1) edge node[left]  {}(e11);
        \path [-](B1) edge node[left] {} (e21);
        \path [-](B1) edge node[left] {} (e31);
        \path [-](B1) edge node[left] {} (e41);
        \path [-](B1) edge node[right] {} (e51);
        \path [-](C1) edge node[left]  {}(e11);
        \path [-](C1) edge node[left] {} (e21);
        \path [-](C1) edge node[left] {} (e31);
        \path [-](C1) edge node[left] {} (e41);
        \path [-](C1) edge node[right] {} (e51);
        \path [-](C1) edge node[right] {} (c11);
        \path [-](C1) edge node[right] {} (c12);
        \path [-](C1) edge node[right] {} (c13);
        \path [-](C1) edge node[right] {} (c14);

    \end{tikzpicture}
    \end{center}
    \caption{A schematic depiction of the reduction for the problem of finding a locally proportional allocation. Here, $m = 5$ and $n = 3$. The vertices $C_1^b,\ldots,C_4^b$ induce a clique (edges omitted for clarity).}
    \label{fig:lprop}
    \end{figure}
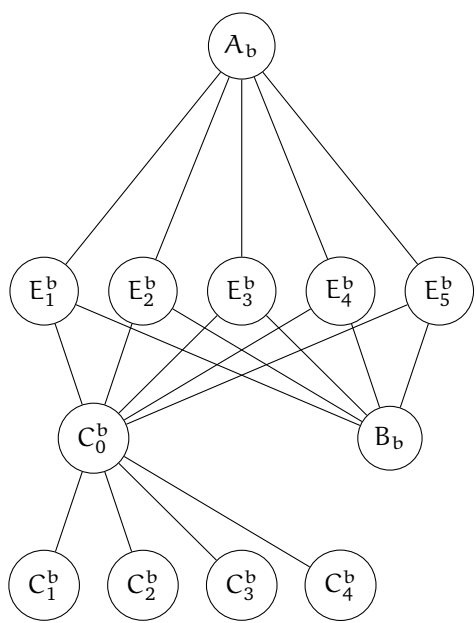

\subsection{Quasi-Global Proportionality: Efficient Algorithms}

To obtain efficient algorithms for finding Pareto-efficient allocations that respect quasi-global proportionality, We model the problem of finding Pareto-efficient allocations respecting quasi-global proportionality using an integer linear program (ILP) with a structured constraint matrix. In particular, it is well-known that if the constraint matrix of an ILP is totally unimodular\footnote{A unimodular Matrix is a square integer matrix having determinant +1 or -1. A totally unimodular matrix is a matrix for which every square non-singular submatrix is unimodular.}, then the corresponding instance can be solved in polynomial time. We turn to an explanation of our encoding.

\begin{theorem}
    \label{thm:gpropptime}
    The problem of finding a Pareto-efficient allocation that is quasi-globally proportional with respect to an underlying undirected graph on the agents can be solved in polynomial time if all agents have $0/1$ valuations.
\end{theorem}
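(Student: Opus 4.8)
The plan is to encode the problem as an integer linear program whose constraint matrix is the incidence-type matrix of a bipartite structure (agents vs.\ goods), and argue total unimodularity from there. Since all valuations are $0/1$, recall that Pareto-efficiency is equivalent to the allocation being complete and non-wasteful. Hence I introduce a binary variable $z_{\el,o}$ for every agent $\el$ and every good $o$ that $\el$ \emph{values} (i.e.\ $\nu_\el(\{o\})=1$); variables for non-valued pairs are simply omitted, which automatically enforces non-wastefulness. Completeness becomes, for each good $o$, the equality $\sum_{\el \,:\, \nu_\el(\{o\})=1} z_{\el,o} = 1$; disjointness of bundles is subsumed in this same equality. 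The quasi-global proportionality requirement for agent $\el$, namely $\nu_\el(\pi(\el)) \ge \frac{1}{d(\el)+1}\nu_\el(\RES)$, becomes $\sum_{o \,:\, \nu_\el(\{o\})=1} z_{\el,o} \ge \bigl\lceil \nu_\el(\RES)/(d(\el)+1) \bigr\rceil$, a single linear inequality per agent with right-hand side a fixed integer computable in polynomial time.

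\textbf{Next I would} observe that the coefficient matrix of this ILP has a very restricted shape. Think of rows as indexed by goods (the equality constraints) and by agents (the proportionality inequalities), and columns indexed by the variables $z_{\el,o}$. A column corresponding to the pair $(\el,o)$ has a $1$ in exactly two rows: the row for good $o$ and the row for agent $\el$. Thus the matrix is the (transpose of the) incidence matrix of a bipartite graph whose two sides are ``goods'' and ``agents'', with an edge for each valued pair --- and every column has exactly two ones, one in each part. The incidence matrix of a bipartite graph is the textbook example of a totally unimodular matrix (each column sums to $2$ and one can $2$-color the rows so that the two ones in every column land in different colors; apply the standard TU criterion). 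Converting the inequalities to equalities by adding slack variables, or simply invoking the fact that appending identity columns and negating rows preserves total unimodularity, the full constraint matrix of the ILP stays totally unimodular. Hence the LP relaxation has an integral optimal vertex whenever it is feasible, and feasibility of the ILP is decided in polynomial time by linear programming (e.g.\ via the ellipsoid method or an interior-point method); one then rounds to an integral vertex, or reads off an integral basic feasible solution directly.

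\textbf{I would also} note that we do not need an objective function at all --- this is a pure feasibility question --- so it suffices to test whether the polytope defined by the equalities, the proportionality inequalities, and the box constraints $0 \le z_{\el,o} \le 1$ is nonempty, and if so extract a vertex, which by total unimodularity is $0/1$-valued and therefore corresponds to a genuine allocation that is complete, non-wasteful (hence Pareto-efficient), and quasi-globally proportional. Conversely, any such allocation gives a feasible $0/1$ point, so the ILP is feasible if and only if the desired allocation exists, establishing correctness of the reduction.

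\textbf{The main obstacle} I anticipate is not the total unimodularity argument, which is standard once the matrix is recognized as a bipartite incidence matrix, but rather being careful that the proportionality constraints really do fit this pattern: each such constraint must involve only variables $z_{\el,o}$ sharing the \emph{same} agent $\el$, so that the corresponding column still has its ``agent one'' in a single row. This is exactly why quasi-global proportionality works while local proportionality does not --- the local-proportionality constraint for $\el$ mixes in the bundles $\pi(j)$ of \emph{neighbors} $j\in N[\el]$, i.e.\ variables $z_{j,o}$ with $j \ne \el$, which destroys the two-ones-per-column structure and, as Theorem~\ref{thm:nphlprop} shows, makes the problem NP-hard. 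So the write-up should emphasize that the right-hand side constant $\lceil \nu_\el(\RES)/(d(\el)+1)\rceil$ depends on the graph only through the \emph{scalar} $d(\el)$, not through any coupling of variables across agents, and verify the TU certificate explicitly for the augmented (slack-variable) matrix.
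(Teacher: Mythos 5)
Your proposal is correct and follows essentially the same route as the paper: encode the allocation as an ILP over agent--good variables whose constraint matrix has the bipartite incidence structure (one ``good'' row and one ``agent'' row per column), invoke total unimodularity, and solve the LP relaxation in polynomial time. If anything, your write-up is slightly more careful than the paper's, since you round the proportionality right-hand side to $\bigl\lceil \nu_\el(\RES)/(d(\el)+1)\bigr\rceil$ to keep the polyhedron integral and you replace the paper's objective-value-$m$ trick with an equivalent pure feasibility formulation (under the paper's standing assumption that every good is valued by at least one agent).
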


\begin{proof} Let us assume there are $n$ agents and $m$ goods. We will introduce a variable $x_{ij}$ which indicates whether agent $i$ gets good $j$, and $a_{ij}$ indicates whether agent $i$ likes good $j$. These constraints are as follows.
\begin{itemize}

    \item We encode the fact that the allocation defined by $x$ is well-defined by introducing the following constraint for each good $j$:

    \item For each agent $i$, let $s_i$ be the number  of  items  that  have  utility 1 for  agent $i$. For each agent $i$, introduce the following proportionality constraint:
    \[\forall j \sum_{i=1}^n x_{ij} \leq 1 \text{ and } \forall i \sum_{j=1}^{m}{a_{ij}x_{ij}} \geq  \frac{s_i}{(d_i+1)} \]
\end{itemize}

We let the objective function be $\sum_{i=1}^n a_{ij}x_{ij}$. Note that any assignment for which this function achieves a value of $m$ is complete and non-wasteful, and also respects quasi-global proportionality. It is straightforward to verify that the constraint matrix for the ILP described above is totally unimodular for any underlying graph $H$.
\end{proof}

We remark that the problem of assigning goods in a proportional fashion (for any of the notions of proportionality that we have introduced) beyond $0/1$ valuations is NP-hard even when there are only two agents with identical valuations, by a standard reduction from \textsc{Partiton}, with the graph being a singe edge on two agents.

\section{Concluding Remarks}

We studied locally EEF allocations in the setting of binary valuations and undirected graphs, and demonstrated that the problem of finding such allocations is computationally intractable for various restricted settings. On the algorithmic front, tools based on dynamic programming and ILP can be used when the instance is structured and has a small number of agent types and item types. 

It is natural to consider notions of fairness and efficiency other than the ones that we explored here. We remark that notions for which allocations can be found efficiently for complete graphs (such as EF1+PO in the setting of binary allocations), the local version is not as interesting, since the allocation that works for a complete graph will work for any graph. For this reason, fairness notions of EF1 or EQx are not relevant to this setting when combined with Pareto efficiency. It would be worth exploring locally fair and efficient allocations for more general valuations, in particular, including chores. In the context of parameterized complexity, a specific unresolved question is if the problem of finding locally EEF allocations parameterized by vertex cover in XP.



\bibliography{refs}

\end{document}